\def\Property{\ensuremath{\hat{P}}}  
\newcommand{\G}{\mbox{\bf G\,}}
\newcommand{\X}{\mbox{\bf X\,}}
\newcommand{\W}{\mbox{\bf \,W\,}}
\newtheorem{thm}{Theorem}
\newtheorem{cor}{Corollary}
\begin{document}

\begin{titlepage}

\newcommand{\HRule}{\rule{\linewidth}{0.5mm}} 

\center 


\textsc{\LARGE Universit\'{e} du Qu\'{e}bec \`{a} Chicoutimi}\\[1.5cm] 
\textsc{\Large Department of Computer Science and Mathematics }\\[0.5cm] 
\textsc{\large Laboratoire D'informatique Formelle}\\[0.5cm] 


\HRule \\[0.4cm]
{ \huge \bfseries Runtime Enforcement With Partial Control}\\[0.4cm] 
\HRule \\[1.5cm]


\begin{minipage}{0.4\textwidth}
\begin{flushleft} \large
Rapha\"{e}l \textsc{Khoury} 
\end{flushleft}
\end{minipage}
~
\begin{minipage}{0.4\textwidth}
\begin{flushright} \large
Sylvain \textsc{Hall\'{e}} 
\end{flushright}
\end{minipage}\\[2cm]



{\large \today}\\[2cm] 




\vfill 

\end{titlepage}

\begin{abstract}
This study carries forward the line of enquiry that seeks to characterize precisely which security policies are enforceable by runtime monitors. In this regard, Basin et al.\ recently refined the structure that helps distinguish between those actions that the monitor can potentially suppress or insert in the execution, from those that the monitor can only observe. In this paper, we generalize this model by organizing the universe of possible actions in a lattice that naturally corresponds to the levels of monitor control. We then delineate the set of properties that are enforceable under this paradigm and relate our results to previous work in the field. Finally, we explore the set of security policies that are enforceable if the monitor is given greater latitude to alter the execution of its target, which allows us to reflect on the capabilities of different types of monitors.
\end{abstract}

\section{Introduction}\label{sec:introduction}

Runtime monitoring is an approach to enforcing security policies that seeks to allow untrusted code to run safely by observing its execution and reacting as needed to prevent a violation of a user-supplied security policy.  This method of ensuring the security of code is rapidly gaining acceptance in practice and several implementations exist \cite{COSREVIEW12}.  One question seems to recur frequently in multiple studies: exactly which set of properties are \textit{monitorable}, in the sense that they are enforceable by monitors. Previous research has identified several factors that can affect the set of security policies enforceable by monitors. These include the means at the disposal of monitors to react to a potential violation of the security policy \cite{MoreEnforce}, the availability of statically gathered data about the target program's possible executions \cite{ChabotJournal,MoreEnforce}, memory and computability constraints \cite{fong,Computability2} etc.

One specific aspect that can have a considerable impact on the monitor's expressiveness is its ability to either suppress certain actions performed by the target program from occurring during the execution (while allowing the remainder of the execution to continue unaffected) or to insert additional events in an ongoing execution. These abilities, when available, extend the monitor's enforcement power considerably. Indeed, a lower bound on the enforcement power of monitors is given by Schneider \cite{enforceable} who shows that the set of properties enforceable by a monitor whose only possible reaction to a potential violation of the desired security policy is to abort the execution coincides with the set of safety properties. Conversely, Ligatti et al.\ \cite{nonsafetyJournal} consider the case of a monitor with an unlimited ability to delay  any event performed by the target program until it has ascertained that its occurrence in the execution would not violate the security policy.  In effect, the monitor is simulating the execution of the program until it is certain that the behaviour it has so far witnessed is correct.  When behaving in this manner, the monitor can enforce a vast range of security properties, termed the set of infinite renewal properties, which   includes all safety policies, some liveness policies and some policies that are neither safety nor liveness.

Yet, it may not be realistic to assume that the monitor has an unlimited ability to simulate the execution of the target program.
Indeed,  as Ligatti et al.\  \cite{nonsafetyJournal} point out: ``\textit{[O]ur model assumes that security automata have the same computational capabilities as the system that observes the monitor's output. If an action violates this assumption by requiring an outside system in order to be executed, it cannot be feigned (i.e., suppressed) by the monitor. For example, it would be impossible for a monitor to feign sending email, wait for the target to receive a response to the email, test whether the target does something invalid with the response, and then decide to undo sending email in the first place. Here, the action for sending email has to be made observable to systems outside of the monitor's control in order to be executed, so this is an unsuppressible action. [...] Similarly, a system may contain actions uninsertable by monitors because, for example, the monitors [...] lack access to secret keys that must be passed as parameters to the actions. In general, environmental factors beyond the control of the monitor may give rise to actions that are unsuppressible or uninsertable.}''
The set of infinite renewal should thus be seen as an upper bound to the enforcement power of monitors.

To this end, Basin et al.\ propose a middle ground \cite{enforceableRevisited}. They partition the set of possible program actions in two disjoint subsets: a set of controllable actions, which the monitor may freely suppress from the execution, and a set of observable actions, whose occurrence the monitor can only observe. This allows for a more precise characterization to the set of monitorable properties. Section \ref{sec:preliminaries} will discuss these concepts in more detail.

In Section \ref{sec:partial}, we further generalize this analysis by organizing the set of possible actions along a lattice that distinguishes between four types of atomic actions, namely \emph{controllable} actions (which a monitor can insert or block from an execution), \emph{insertable} actions (which a monitor can add to the execution but not suppress), \emph{suppressible} actions (the converse) and \emph{observable} actions (which the monitor can only observe). We then delineate the set of properties that are enforceable under this paradigm and relate our results to previous work in the field.

Finally, we explore in Section \ref{sec:equivalence} the set of security policies that are enforceable if the monitor is given greater latitude to alter the execution of its target, rather than be bounded to return a syntactically identical execution sequence if the original execution is valid. In particular, we consider a monitor which can \textit{add} any action into the execution, but cannot prevent any action from occurring if the target program requests it. We also consider a monitor can \textit{remove} potentially malicious actions  performed by the target program but cannot add any action to the execution. We show how both can be handled by our model by simply considering a different equivalence relation between traces.

\section{Preliminaries}\label{sec:preliminaries}

\subsection{Executions}
Executions are modelled as sequences of atomic actions taken from a finite or countably infinite set of actions $\Sigma$.  The empty sequence is noted $\epsilon$, the set of all finite length sequences is noted $\Sigma^*$, that of all infinite length sequences is noted $\Sigma^\omega$, and the set of all possible sequences is noted $\Sigma^\infty = \Sigma^\omega \cup \Sigma^*$. 
Let $\tau \in \Sigma^*$ and $\sigma \in \Sigma^\infty$ be two sequences of actions.
We write $\tau;\sigma$ for the concatenation of $\tau$ and $\sigma$. We say that $\tau$ is a prefix of $\sigma$ noted $\tau \preceq \sigma$, or equivalently $\sigma\succeq\tau$  $\mathit{iff}$ there exists a sequence $\sigma'$ such that $\tau;\sigma'=\sigma$. Let $\tau, \sigma \in \Sigma^\infty$ be a sequence, we write acts $(\sigma)$ for the set of actions  present in $\sigma$. We write $res_{\Property}(\sigma)$ for the residual of $\Property$ with regard to $\sigma$, i.e. the set of sequence $S\subseteq\Sigma^\infty$ s.t. $\forall \tau\in S :\sigma;\tau\in\Property$. 
Finally, let $\tau, \sigma \in \Sigma^\infty$, $\tau$ is  said to be a suffix of $\sigma$ iff there exists a $\sigma'\in\Sigma^*$ such that $\sigma=\sigma';\tau$.

Following \cite{nonsafetyJournal}, if $\sigma$ has already been quantified, we freely write $\forall \tau\preceq\sigma$ (resp. $\exists \tau\preceq\sigma$) as an abbreviation to $\forall \tau \in \Sigma^*:\tau\preceq\sigma$ (resp. $\exists \tau \in \Sigma^*:\tau\preceq\sigma$).  Likewise, if $\tau$ has already been quantified,  $\forall \sigma \in \Sigma^\infty:\sigma\succeq\tau$ (resp. $\exists \sigma \in \Sigma^\infty:\sigma\succeq\tau$) can be abbreviated as $\forall \sigma\succeq\tau$ (resp. $\exists \sigma\succeq\tau$).

Let $\tau$ be a sequence and $a$ be an action, we write $\tau\backslash a$ for the left cancellation of $a$ from $\tau$, which is defined as the removal from $\tau$ of the first occurrence of $a$.  Formally:

\begin{displaymath}
a;\tau\backslash a'= \begin{cases}
                          \tau & \hbox{if } a=a'; \\
                          a;(\tau\backslash a') & \hbox{otherwise}
                        \end{cases}
\end{displaymath}

Observe that $\epsilon\backslash a=\epsilon$.  Abusing the notation, we write $\tau\backslash \tau'$ to denote the sequence obtained by left cancellation of each action of $\tau'$ from $\tau$. Formally, $\tau\backslash a ;\tau'=(\tau\backslash a )\backslash \tau'$. For example, $abcada \backslash daa = bca$.

A finite word $\tau\in\Sigma^*$ is said to be a subword of a word $\omega$, noted $\tau\triangleleft_{\Sigma}\sigma$, iff $\tau = a_0a_1a_2a_3...a_k$ and $\omega=\sigma_0 a_0 \sigma_1 a_1 \sigma_2 a_2 \sigma_3 a_3... \sigma_k a_k \upsilon $ with $\sigma_0, \sigma_1,\sigma_2... \in \Sigma^*$ and $\upsilon\in \Sigma^\infty$.  Let $\tau,\sigma$ be sequences form $\Sigma^*$. We write $cs_\tau(\sigma)$ to denote the longest subword of $\tau$ which is also a subword of $\sigma$. For any $\tau \neq \epsilon$, $\tau.last$ denotes the last action of sequence $\tau$.


\subsection{Security Policies and Security Properties}

A security policy $P$ is a property iff it can be characterized as a set of sequences for which there exists a decidable predicate $\Property$ over the executions of $\Sigma^\infty:\Property(\sigma)$ iff $\sigma$ is in the policy \cite{enforceable}. In other words, a property is a policy for which the membership of any sequence can be determined by examining only the sequence itself \footnote{Security policies whose enforcement necessitates the examination of multiples execution sequences, such as noninterference policies, are not generally enforceable by monitors.}. Such a sequence is said to be \textit{valid} or to \textit{respect} the property. Since, by definition, all policies enforceable by monitors are properties, P and \Property\ are used interchangeably in our context. Additionally, since the properties of interest represent subsets of $\Sigma^\infty$, we follow the common usage in the literature and freely use \Property\ to refer to these sets.

A number of classes of properties have been defined in the literature and are of special interest in the study of monitoring. First are {\it safety} properties \cite{safety}, which proscribe the occurrence of a certain ``bad thing'' during the execution. Formally, let $\Sigma$ be a set of actions and  \Property\ be a property. \Property\ is a {\it safety} property iff
\begin{equation}
\tag{safety}
 \forall\sigma\in
 \Sigma^\infty:\neg\Property(\sigma)\Rightarrow\exists\sigma'\preceq\sigma:
\forall\tau\succeq\sigma':\neg\Property(\tau)
 \end{equation}

Informally, this states that any sequence does not respect the security property if there exists a prefix of that sequence from which any possible extension does not respect the security policy.   This implies that a violation of a safety property is irremediable: once a violation occurs, nothing can be done to correct the situation.

Alternatively, a {\it liveness} property \cite{defliv} is a property prescribing that a certain ``good thing'' must occur in any valid execution. Formally, for an action set $\Sigma$ and a property \Property, \Property\ is a liveness property iff
\begin{equation}
\tag{liveness}
  \forall\sigma\in \Sigma^*:\exists\tau \in \Sigma^\infty: \tau
  \succeq\sigma \wedge \Property(\tau)
 \end{equation}

Informally, the definition states that a property is a liveness property if any finite sequence can be extended into a valid sequence.

Another class of security properties that are of interest is that of {\it renewal} properties\cite{nonsafetyJournal}. A property is in renewal if every infinite valid sequence has infinitely many valid prefixes, while every infinite invalid sequence has only finitely many such prefixes. Observe that every property over finite sequences is in infinite renewal.  The set of renewal properties is equivalent to the set of response properties in the safety-progress classification \cite{safPro}.
\begin{equation}
\tag{renewal}
 \forall\Property\subseteq\Sigma^\omega:\Property(\sigma)\Leftrightarrow\exists \sigma'\preceq\sigma:\exists\tau\preceq\omega:\tau\succeq\sigma'\wedge\Property(\tau).
\end{equation}

It is often useful to restrict our analysis to properties for which the empty sequence $\epsilon$ is valid. Such properties are said to be \textit{reasonable} \cite{nonsafetyJournal}.  Formally,
\begin{equation}
\tag{reasonable}
   \forall\Property\subseteq\Sigma^\infty: \Property(\epsilon)\Leftrightarrow\Property{} \textit{ is reasonable}
 \end{equation}

In the remainder of this paper, we will only consider reasonable properties.  Furthermore, in order to avoid having the main topic of this paper be sidestepped by decidability issues, will consider that $\Property(\sigma)$ is decidable for all properties and all execution sequences. Likewise, we also consider that other predicates or functions over sequences are decidable.

\subsection{Security Property Enforcement}

Finally, we need to provide a definition of what it means to ``enforce'' a security property \Property.  A number of possible definitions have been suggested. The most widely used is effective$_{\cong}$ enforcement \cite{MoreEnforce}. Under this definition, a property is effectively$_{\cong}$ enforced iff the following two criterion are respected.
\begin{enumerate}
  \item Soundness: All observable behaviours of the target program respect the desired property, i.e.\ every output sequence is present in the set of executions defined by \Property.
  \item Transparency: The semantics of valid executions is preserved, i.e.\ if the execution of the unmonitored program already respects the security property, the monitor must output  an equivalent sequence, with respect to an equivalence relation $\cong\subseteq\Sigma^\infty\times\Sigma^\infty$.
\end{enumerate}

Syntactic equality is the most straightforward equivalence relation, and the one that has been the most studied in the literature.  It models the behaviour of a monitor that enforces the desired property by suppressing (or simulating) part of the execution, only allowing it to be output  when it has ascertained that the execution up to that point is valid. In section \ref{sec:equivalence}, we consider two alternative notions of equivalences, which can be used to characterize alternative behaviours on the part of the enforcement mechanism.
\subsection{Related Work}\label{sec:related}

Initial work on the question of delineating which security policies are or are not enforceable by monitor was performed by Schneider \cite{enforceable}. He considered the capabilities of a monitor that observes the execution of its target, with no knowledge of its possible future behaviour and no means to affect the target except by aborting the execution. Each time the target program attempts to perform an action, the monitor has to either accept it immediately, or abort the execution. Under these constraints, the set of properties enforceable by monitors coincides with the set of \textit{safety} properties.

Ligatti et al.\ \cite{editauto} extend Schneider's modelling of monitors along three axes:
\begin{enumerate}
\item According to the means at the disposal of the monitor to react to a potential violation of the security policy. These include truncating the execution, inserting new actions into the execution, suppressing some part of the executions or both inserting and suppressing actions
\item According to the availability of statically gathered data describing the target program possible execution paths
\item According to how much latitude the monitor is given to alter executions that already respect the security policy.
\end{enumerate}
By combining these three criteria, they build a rich taxonomy of enforceable properties, and contrast the enforcement power of different types of monitors.

Basin et al.\ \cite{enforceableRevisited} generalize Schneider's model by distinguishing between observable actions, whose occurrence the monitor cannot prevent, and controllable actions, which the monitor can prevent from occurring by aborting the execution.

The enforcement power of monitor operating with memory constraints is studied in \cite{fong}, \cite{TTD06-a,TTD06-b,TTD08} and \cite{beauquier}.

The computability constraints that can further restrict a monitor's enforcement power are discussed in \cite{Computability2,CompuARM}; that of monitors relying upon an \textit{a priori} model of the program's possible behaviour is discussed in \cite{ChabotJournal} and \cite{editauto}.

Falcone et al.\ \cite{safProEnf1,safProEnf3} show that the set of infinite renewal properties coincides  with the union of four of the 6 classes of the safety-progress classification of security properties \cite{safPro}. Khoury and Tawbi \cite{KhouryEquiv,KT2012} and Bielova et al.\ \cite{BielovaNordsec,WhatMeanArt,DBLP:conf/essos/BielovaM11} further refine the notion of enforcement by suggesting alternative definitions of enforcement. In \cite{runtimeResults} Ligatti and Reddy introduced an alternative model, the mandatory-result automaton.  This model distinguishes between the action set of the target and that of the system with which it interacts. This distinction makes it easier to study the interaction between the target program, the monitor and the system. A thorough survey of the question of enforceable properties by monitors is provided in \cite{COSREVIEW12}.

\section{Monitoring With Partial Control}\label{sec:partial}

The previous works each consider monitors where actions belong to particular sets. For example, Schneider's model assumes that all actions can be suppressed by the monitor; conversely, Ligatti et al.\ assume that all actions can be indefinitely delayed. Basin et al.\ propose a middle ground where every action can either be freely suppressed, or can only be observed. In this section, we define a generalized model of actions where each of these works becomes a particular case. We then study what properties are enforceable in this generalized model.

\subsection{A Lattice of Actions}

We organize the set of possible actions along a lattice that distinguishes between four types of atomic actions :  namely controllable actions ($\mathcal{C}$), insertable actions ($\mathcal{I}$), suppressible actions ($\mathcal{D}$ ---for delete) and observable actions ($\mathcal{O}$), as is shown in Figure \ref{fig:lattice}.
\begin{itemize}
  \item Controllable actions ($\mathcal{C}$) are the basic actions such as opening a file or sending data on the network, which the monitor can either insert into the execution or prevent from occurring if they violate the security policy. In Ligatti's model, all actions are controllable.
  \item Insertable actions ($\mathcal{I}$) can be added into the execution but not suppressed if they are present. An example of an insertable action is an additional delay before processing a request to bring it unto compliance with a resource usage policy. The monitor may add such actions to the execution sequence but cannot remove them if the target program executes them.
  \item Suppressible actions ($\mathcal{D}$ ---for delete) are those actions that the monitor can prevent from occurring, but cannot insert in the execution if the target program  does not request them. Sending an email, decrypting a file or receiving a user input are all examples of suppressible actions. In Schneider's model, all actions are suppressible.
  \item Observable actions $\mathcal{O}$ can only be observed by the monitor, which can neither insert them in the execution when they are not present  nor suppress them if they occur. In Basin et al.'s model, all actions are either suppressible or observable.
\end{itemize}

As the examples above illustrate, we believe that the lattice model we propose is a more realistic description of the reality encountered by the monitor, and will thus allow a more precise characterization of the set of enforceable properties.  Observe than the monitor may only abort the execution if the next action is in $\mathcal{C}\cup\mathcal{D}$. Moreover, the sets $\mathcal{C},\mathcal{I},\mathcal{D},\mathcal{O}$ are disjoint and that the universe of possible program actions is $\Sigma =  \mathcal{C}\cup\mathcal{I}\cup\mathcal{D}\cup\mathcal{O}$.

\begin{figure}
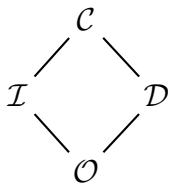

\centering
\psset{nodesep=3pt}
\newpsstyle{DblDash}{linestyle=dashed, dash=1pt 1.5pt, doubleline}
\begin{psmatrix}[mnode=r,colsep=0.6,rowsep=0.5]
&[name=12] $\mathcal{C}$&\\
[name=21] $\mathcal{I}$&&[name=23] $\mathcal{D}$\\
&[name=32] $\mathcal{O}$&
\ncline{12}{21}
\ncline{12}{23}
\ncline{21}{32}
\ncline{23}{32}
\end{psmatrix}
\caption{The lattice of possible actions}
\label{fig:lattice}
\end{figure}


The following notation is useful for comparing different enforcement mechanisms. Let $\Sigma$  be a universe of actions and let $\mathcal{L}$ be a lattice over the set $\Sigma$ as described in section 2. Let $\mathcal{S}\subseteq\Sigma^{\infty}$  stand for a subset of possible execution sequences and let $\cong\subseteq \Sigma^\infty \times \Sigma^\infty$ be an equivalence relation. We write   $\mathcal{L}^{\mathcal{S}}$-enforceable$_{\cong}$ to denote the set of proprieties that are enforceable$_{\cong}$ by a monitor when the set of possible sequences is $\mathcal{S}$ and the set of possible actions is organized alongside lattice $\mathcal{L}$.

Let $\mathcal{L}$ be a lattice as described above. We write $\mathcal{L_O}$ (resp. $\mathcal{L_I}$, $\mathcal{L_D}$, $\mathcal{L_C}$) for the set  $\mathcal{O}$ (resp. $\mathcal{I}$, $\mathcal{D}$, $\mathcal{C}$) in $\mathcal{L}$.
We write $\mathcal{L}=\langle\Sigma_1,\Sigma_2,\Sigma_3,\Sigma_4\rangle$ for the lattice where $\mathcal{L_O}=\Sigma_1$, $\mathcal{L_I}=\Sigma_2$, $\mathcal{L_D}=\Sigma_3$ and  $\mathcal{L_C}=\Sigma_4$.
Let $A, B \in \{ \mathcal{O,I,D,C}\}$ and let $a \in \Sigma$,
we write $\mathcal{L}_{A\xrightarrow{a} B}$  to indicate the lattice $\mathcal{L'}$ defined such that $\mathcal{L'}_A = \mathcal{L}_A\backslash \{a\}, \mathcal{L'}_B  = \mathcal{L}_B\cup \{a\}$ and $\forall C \in \{ \mathcal{O,I,D,C}\}: C\notin \{A, B\} \Rightarrow \mathcal{L'}_C = \mathcal{L}_C$. In other words, $\mathcal{L}_{A\xrightarrow{a}B}$ is the lattice built by moving only element $a$  from set $A$ to another $B$, leaving all other sets unchanged.


\subsection{Enforceable Properties}

We begin by reflecting on the set of properties that are $\mathcal{L}^{\Sigma^\infty}$-enforceable$_{=}$, i.e.\ properties that are enforceable if the monitor is bounded to output any valid sequence exactly as it occurs (with syntactic equality as the equivalence relation between valid inputs and the monitor's output).  This is the enforcement paradigm that has been the most studied in the literature.  A monitor that seeks to enforce a property in this manner may take any one of three strategies, depending on the desired property and the ongoing execution, and the set of  $\mathcal{L}^{\Sigma^\infty}$-enforceable$_{=}$ properties can be derived by combining the three.

First, using a model in which every action is controllable, Ligatti et al.\ argued that a monitor can enforce any reasonable renewal property by suppressing the execution until a valid prefix is reached at which point the monitor can output the suffix of the execution it has previously suppressed.  We generalized the definition of renewal as follows:
\begin{multline}
\tag{$\mathcal{L}$-Renewal}
  \forall\sigma\in \Sigma^\infty: \Property(\sigma)\Leftrightarrow \forall \sigma'\preceq \sigma: (\exists \tau\preceq\sigma:\sigma'\preceq\tau\wedge\Property(\tau) \wedge \\ \forall\tau'\preceq\tau:\tau'\succeq\sigma'\Rightarrow\tau'.last\in{\mathcal{C}}).
\end{multline}

Observe that the definition now applies to all sequences in $\Sigma^\infty$, rather than just to infinite sequences. The second half of the equation always evaluates to true if all the actions are controllable (Ligatti's model) and always evaluates to false if all the actions are suppressible or observable (Schneider or Basin's models).

Second, Ligatti et al.\ observe that a property is enforceable if there exists a prefix beyond which there is only one valid extension. In that case, the monitor can abort the execution and output that sequence. This allows some nonsafety properties to be monitored. Ligatti et al.\ refer to this case as the ``corner case'' of effective enforcement.  We generalize this case as follows:
\begin{multline}
\tag{$\mathcal{L}$-Corner case}
\forall\sigma\in \Sigma^\infty: \Property(\sigma)\vee \exists \sigma'\preceq\sigma :\forall \sigma';\tau\succeq\sigma':\Property(\sigma';\tau)\Rightarrow \sigma';\tau=\sigma \wedge \\
\sigma'.last\in \mathcal{D\cup C} \wedge acts(\tau)\subseteq  \mathcal{I\cup C}.
\end{multline}

Once again, observe that the equation restricts all sequences, rather than only infinite ones. This equation always evaluates to false in the Schneider-Basin model and the second conjunct always evaluates to true in Ligatti's model.

Finally, the monitor can simply abort the execution if it is irremediably invalid. This is a generalization of the set of safety properties to our framework.
\begin{multline}
\tag{$\mathcal{L}$-Safety} 
  \forall\sigma\in \Sigma^\infty:\neg \Property(\sigma)\Rightarrow \exists \tau;a\preceq\sigma :\Property(\tau) \wedge a\in\mathcal{D\cup C} \\
  \wedge \neg\exists \tau'\succeq\tau:\Property(\tau').
\end{multline}

The set of $\mathcal{L}^{\Sigma^\infty}$-enforceable$_{=}$ properties  can now be stated. The definition is not simply the conjunction of the tree preceding set as it must take into account the possibility that enforcement might begin by suppressing and reinserting part of the execution, and then abort the execution using of one of the other methods.
\begin{thm}\label{thm:Master}
\begin{multline*}
\Property{} \in \mathcal{L}^{\Sigma^\infty}-\mbox{enforceable}_{=}  \Leftrightarrow \forall \sigma\in \Sigma^\infty:\\
(\Property(\sigma)\Leftrightarrow \forall\sigma'\preceq\sigma:\exists \tau\preceq\sigma:\sigma'\preceq\tau\wedge\Property(\tau)\wedge (\forall\tau'\preceq\tau:\neg\Property(\tau')\Rightarrow \tau'.last\in \mathcal{C})\vee \\
(\exists \tau;a\preceq\sigma: a\in \mathcal{D} \cup \mathcal{C}  \wedge \forall\tau'\preceq\tau;a:\neg\Property(\tau')\Rightarrow \tau'.last\in \mathcal{C} \wedge
(\forall \tau'\succeq\tau;a:\Property(\tau')\Rightarrow \\
\tau'=\sigma  \vee  (\Property(\tau)\wedge \neg\exists \tau'\succeq \tau:\Property(\tau') \wedge  acts(\sigma \backslash \tau' ;a) \in C \cup I  ) )) )
\end{multline*}
\end{thm}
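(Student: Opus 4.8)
The plan is to prove the biconditional in the two standard directions, using the soundness/transparency template for enforceability characterizations. Write $\Phi(\sigma)$ for the right-hand predicate (the large disjunction quantified by $\forall\sigma'\preceq\sigma$), so that the claim reduces to: $\Property$ is $\mathcal{L}^{\Sigma^\infty}$-enforceable$_{=}$ iff $\forall\sigma\in\Sigma^\infty:\Property(\sigma)\Leftrightarrow\Phi(\sigma)$. For sufficiency ($\Leftarrow$) I would exhibit a concrete monitor and verify it is sound and transparent; for necessity ($\Rightarrow$) I would assume a sound and transparent monitor exists and extract the inner equivalence. Before either direction I would isolate three building blocks corresponding to the $\mathcal{L}$-Renewal, $\mathcal{L}$-Corner case and $\mathcal{L}$-Safety equations, showing that each characterizes the sub-behaviour it was designed for; the theorem's formula is then obtained by letting these behaviours be \emph{composed} rather than taken in isolation, which is precisely why it is stronger than their conjunction.

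For sufficiency, I would build a monitor that reads its input incrementally while maintaining a buffer of actions it has suppressed but not yet re-emitted. It keeps suppressing as long as no valid prefix has appeared, flushing the buffer the instant the current prefix becomes valid; this is permissible exactly because the renewal disjunct forces every intermediate invalid prefix to end in a $\mathcal{C}$ action, the only class that can be both deleted (to hold it in the buffer) and re-inserted (to flush it). When instead the second disjunct applies, the monitor switches to abort mode, truncating at $\tau;a$ --- allowed since $a\in\mathcal{D}\cup\mathcal{C}$ --- and, in the safety-with-insertion sub-case, re-emitting the residual $acts(\sigma\backslash\tau';a)\subseteq\mathcal{C}\cup\mathcal{I}$, all of which are insertable. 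Transparency then follows because on a valid $\sigma$ the renewal disjunct guarantees the buffer is eventually emptied and the output equals $\sigma$ syntactically; soundness follows because the monitor commits a prefix only once it is valid and aborts only to a sequence the relevant disjunct certifies to be valid.

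For necessity, I would argue by contraposition on each half of the inner biconditional. For $\Property(\sigma)\Rightarrow\Phi(\sigma)$: transparency forces the monitor to output $\sigma$ exactly, so along any successful run its suppressed-and-reinserted actions must lie in $\mathcal{C}$, its abort points in $\mathcal{D}\cup\mathcal{C}$, and any inserted tail in $\mathcal{I}\cup\mathcal{C}$; reading these action-type constraints off the run yields precisely one of the two disjuncts. For $\Phi(\sigma)\Rightarrow\Property(\sigma)$: I would assume $\Phi(\sigma)$ yet $\neg\Property(\sigma)$ and show that the bookkeeping recorded in $\Phi$ describes a run the monitor is forced to follow, so that soundness is violated; the hypotheses $\tau'.last\in\mathcal{C}$ and $acts(\sigma\backslash\tau';a)\subseteq\mathcal{C}\cup\mathcal{I}$ block every escape by ruling out an unsuppressible or uninsertable action on which a sound monitor could have diverged from $\sigma$.

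The main obstacle I anticipate is the interaction between the renewal and abort phases --- the case, emphasised in the statement, where enforcement first suppresses and re-emits a prefix and only later truncates. Tracking the buffer correctly under the left-cancellation semantics, and in particular verifying the final clause $\neg\exists\tau'\succeq\tau:\Property(\tau')\wedge acts(\sigma\backslash\tau';a)\subseteq\mathcal{C}\cup\mathcal{I}$ of the safety-with-insertion sub-case, is the delicate bookkeeping on the sufficiency side; the matching $\Phi(\sigma)\Rightarrow\Property(\sigma)$ step, where one must show that an invalid $\sigma$ satisfying $\Phi$ genuinely traps a sound monitor into emitting it, is the delicate step on the necessity side.
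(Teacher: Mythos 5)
Your proposal matches the paper's own proof in both directions: the paper's sufficiency argument constructs exactly the monitor you describe (an edit automaton whose state is a pair of output-so-far and suppressed buffer, which suppresses until a valid prefix appears, flushes it, and otherwise aborts or emits the unique corner-case continuation, with invariants for soundness, transparency, and never manipulating an action outside its class), and its necessity argument is the same trapped-monitor contraposition, negating the formula to obtain an invalid prefix not ending in $\mathcal{C}$ that the monitor can neither accept (soundness) nor suppress-and-reinsert (transparency). The proposal is correct and essentially identical in structure to the paper's proof.
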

\begin{proof}
See Appendix A for all proofs to theorems and corollaries.
\end{proof}

This definition narrows the set of monitorable properties somewhat, compared with previous work. For example, bounded availability is given in \cite{nonsafetyJournal} as an example of a monitorable policy. In fact, it is monitorable only if every action  that occurs between the acquisition and release of a protected resource is controllable. Likewise, the ``\textit{no send after read}'' property described in \cite{enforceable} is only enforceable if every action which might violate the property is deletable or controllable.

While the set-theoretic characterization of enforceable property is somewhat involved, an LTL property can characterize such properties.

\begin{thm}
Let \textit{valid} be a predicate identifying a valid sequence:
\[
\mbox{valid}(\sigma)\Leftrightarrow\Property(\sigma)
\]
Let \textit{cc} be a predicate that identifies a sequence in the $\mathcal{L}$-corner case:
\begin{multline*}
\mbox{cc}(\sigma)\Leftrightarrow\Property(\sigma)\wedge \exists \sigma'\preceq\sigma :\forall \sigma';\tau\succeq\sigma':\Property(\sigma';\tau)\Rightarrow \\ \sigma';\tau=\sigma \wedge\sigma'.last\in \mathcal{D\cup C} \wedge acts(\tau)\subseteq  \mathcal{I\cup C}
\end{multline*}
Let $C$ be a predicate identifying a sequence ending on a controllable action, and \textit{D} a predicate that identifies a sequence ending on a deletable action:
\begin{eqnarray*}
\mbox{C}(\sigma)& \Leftrightarrow & \sigma.last\in \mathcal{C} \\
\mbox{D}(\sigma)& \Leftrightarrow & \sigma.last\in \mathcal{D}
\end{eqnarray*}
Then we have:
\begin{multline*}
\Property{} \in \mathcal{L}^{\Sigma^\infty}$-enforceable$_{=} \Leftrightarrow  \forall \sigma\in \Sigma^\infty:  \\
\G ( C \W \mbox{valid}) \vee (C \W \mbox{valid} \vee \X ((D \vee C) \wedge (\G \neg \mbox{valid} \vee \mbox{cc})))
\end{multline*}
\end{thm}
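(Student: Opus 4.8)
The plan is to reduce the statement to Theorem~\ref{thm:Master} rather than to re-derive enforceability from scratch. That theorem already shows that $\Property \in \mathcal{L}^{\Sigma^\infty}$-enforceable$_{=}$ holds exactly when, for every $\sigma \in \Sigma^\infty$, the membership $\Property(\sigma)$ coincides with the set-theoretic master formula; write $F(\sigma)$ for that formula and $\varphi(\sigma)$ for the displayed LTL formula. It then suffices to prove the purely logical equivalence $\varphi(\sigma) \Leftrightarrow (\Property(\sigma) \Leftrightarrow F(\sigma))$ for every $\sigma$, so that the whole argument becomes a translation between the two syntaxes and invokes no new monitor construction.

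First I would fix the prefix-based semantics of the temporal operators over a (possibly infinite) word $\sigma$. Each nonempty prefix $\sigma' \preceq \sigma$ is read as a position, and the atoms are interpreted on that prefix: $C$ holds at $\sigma'$ iff $\sigma'.last \in \mathcal{C}$, $D$ holds iff $\sigma'.last \in \mathcal{D}$, $\mbox{valid}$ holds iff $\Property(\sigma')$, and $\mbox{cc}$ holds iff $\sigma'$ meets the corner-case predicate. I would then unfold the operators into explicit prefix quantifications: $\G\psi$ becomes $\forall \sigma' \preceq \sigma$, $\X\psi$ advances past the first action of the current prefix, and the weak-until $\psi_1 \W \psi_2$ evaluated at a prefix $\sigma'$ becomes ``either $\psi_2$ holds at some prefix $\tau \succeq \sigma'$ with $\psi_1$ holding at every prefix strictly between $\sigma'$ and $\tau$, or $\psi_1$ holds at every prefix extending $\sigma'$.'' This step fixes the vocabulary in which the remaining equivalences are checked.

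The core is a disjunct-by-disjunct matching against the three strategies folded into $F$. I would show that $\G(C \W \mbox{valid})$ unfolds precisely to the $\mathcal{L}$-Renewal clause — from every prefix $\sigma'$ one reaches a valid prefix $\tau \preceq \sigma$ through intermediate prefixes all ending in $\mathcal{C}$ — with the weak branch of $\W$ accounting for the controllable-tail case; that the last disjunct $\X((D \vee C) \wedge (\G \neg\mbox{valid} \vee \mbox{cc}))$ reproduces the $\mathcal{L}$-Safety abort, via $\G\neg\mbox{valid}$ with the aborting action forced into $\mathcal{D} \cup \mathcal{C}$ by $D \vee C$, together with the $\mathcal{L}$-Corner case via $\mbox{cc}$; and that the middle disjunct $C \W \mbox{valid}$ supplies exactly the residual suppress-then-reinsert situation that, as the surrounding text stresses, keeps $F$ from being a mere conjunction of the three clauses. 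I would then assemble the three correspondences and close the biconditional by splitting on whether $\Property(\sigma)$ holds, so that the valid-handling disjuncts and the invalid-handling $\G\neg\mbox{valid}$ disjunct deliver the two directions of $\Property(\sigma) \Leftrightarrow F(\sigma)$.

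The step I expect to be the main obstacle is reconciling the weak-until with the behaviour of $F$ on infinite words. A weak-until that is never discharged (the ``$C$ forever'' branch) must be shown to coincide with the renewal requirement of infinitely many valid prefixes and not to admit spurious infinite controllable tails, and the middle disjunct must be matched against the nested $acts(\sigma \backslash \tau';a) \in \mathcal{C} \cup \mathcal{I}$ clause of $F$ without either admitting or dropping a single sequence. Getting this alignment exactly right, so that the disjunctive LTL structure faithfully mirrors the non-conjunctive combination of Theorem~\ref{thm:Master}, is the delicate part; the safety and pure-renewal correspondences are comparatively mechanical.
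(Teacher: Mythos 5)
A preliminary remark on the comparison itself: despite the ``see Appendix A'' pointer in the paper, the appendix resets the theorem counter and jumps from Theorem~\ref{thm:Master} directly to the safety instantiation (Theorem~4), so the paper contains \emph{no} proof of this statement; your attempt has to be judged on its own merits rather than against an authorial argument.

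Your framing --- treat the theorem as a re-notation of Theorem~\ref{thm:Master} and prove a purely logical equivalence $\varphi(\sigma)\Leftrightarrow(\Property(\sigma)\Leftrightarrow F(\sigma))$ under a prefixes-as-positions semantics --- is the right and essentially the only sensible strategy. The gap is that the central matching claim is false under the semantics you fix. You read $\W$ as standard weak until and propose to show that $\G(C \W \mbox{valid})$ unfolds to the $\mathcal{L}$-Renewal clause. But the undischarged branch of a weak until is satisfied by \emph{any} tail of positions ending in controllable actions, with no validity requirement whatsoever. Take Ligatti's instantiation $\mathcal{L}=\langle\emptyset,\emptyset,\emptyset,\Sigma\rangle$: the atom $C$ then holds at every position, so $C \W \mbox{valid}$ holds everywhere via its weak branch, $\G(C\W\mbox{valid})$ holds for every $\sigma$, and the right-hand side of the theorem becomes true for \emph{every} property $\Property$. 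Yet by Theorem~5 only renewal-or-corner-case properties are enforceable$_{=}$ in that lattice; the reasonable property consisting of $\epsilon$ together with the infinite sequences containing finitely many occurrences of a fixed action $a$ is a concrete witness (each of its valid infinite sequences has $\epsilon$ as its only valid finite prefix, so it is neither renewal nor corner case). Hence the equivalence you aim for fails, and it fails even after the universal quantification over $\sigma$, so retreating from the pointwise target to the quantified one does not rescue the plan. What you flagged as ``the main obstacle'' --- spurious infinite controllable tails --- is therefore not a delicate verification step but a counterexample to the approach: any successful proof must first reinterpret the operators (e.g.\ read the never-discharged branch of $\W$ as still forcing eventual validity on infinite words, reserving weakness for the end of finite traces) or repair the formula, and your proposal does neither.

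There is also a structural defect in the disjunct-by-disjunct matching. Under your (standard) semantics $\G\psi$ entails $\psi$ at the initial position, so the first disjunct $\G(C\W\mbox{valid})$ entails the middle one $C\W\mbox{valid}$, which absorbs it; the middle disjunct therefore cannot encode a separate ``suppress, reinsert, then abort'' case, and in particular cannot be matched against the $acts(\sigma\backslash\tau';a)\in\mathcal{C}\cup\mathcal{I}$ clause of $F$ as you intend, since every sequence you assign to the renewal disjunct already satisfies it. The intended three-way correspondence with the three enforcement strategies of Theorem~\ref{thm:Master} collapses to a two-way one before the matching even begins.
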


In this theorem, we assume that any execution sequence from $\Sigma^\infty$ is possible. In other words, at each step of the execution, the monitor must assume that any action from $\Sigma$ is a possible next action, or that the execution of the target program could stop. This is called the uniform enforcement context. The monitor often operates in a context where it knows that certain executions are impossible (the nonuniform context).
This situation occurs when the monitor benefits from a static analysis of its target, that provides it with a model of the target's possible behaviour.  We can adapt the above theorem to take into account the fact that the monitor could operate in a nonuniform context.

\begin{thm}
\begin{multline*}
\Property{} \in \mathcal{L}^{\mathcal{S}}$-enforceable$_{=}$ $ \Leftrightarrow \forall \sigma\in \mathcal{S}:\\
(\Property(\sigma)\Leftrightarrow \forall\sigma'\preceq\sigma:\exists \tau\preceq\sigma:\sigma'\preceq\tau\wedge\Property(\tau)\wedge (\forall\tau'\preceq\tau:\neg\Property(\tau')\Rightarrow \\
(\tau'.last\in \mathcal{C} \wedge \tau \in \mathcal{S}) )\vee \\ (\exists \tau;a\preceq\sigma: a\in  \mathcal{D} \cup \mathcal{C} \wedge \forall\tau'\preceq\tau;a:(\neg\Property(\tau')\wedge \tau' \in \mathcal{S})\Rightarrow \tau'.last\in \mathcal{C} \wedge \\ (\forall \tau'\succeq\tau;a:\Property(\tau')\Rightarrow \tau'=\sigma  \vee  (\Property(\tau)\wedge \neg\exists \tau'\succeq \tau:\\
\Property(\tau')\wedge \tau'\in \mathcal{S} \wedge  acts(\sigma \backslash \tau' ;a) \in \mathcal{C} \cup \mathcal{I}  ) )) )
\end{multline*}
\end{thm}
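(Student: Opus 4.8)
The plan is to treat this as the nonuniform counterpart of Theorem~\ref{thm:Master} and to reuse essentially the same monitor construction and case analysis, relativizing every reachability and extension claim to the set $\mathcal{S}$ of possible executions. Throughout, write $\Phi(\sigma)$ for the bracketed right-hand condition, so that the theorem asserts that $\Property$ is $\mathcal{L}^{\mathcal{S}}$-enforceable (under syntactic equality) iff $\forall\sigma\in\mathcal{S}:\Property(\sigma)\Leftrightarrow\Phi(\sigma)$. The guiding observation is that in the nonuniform context the monitor need only be sound on inputs drawn from $\mathcal{S}$ and transparent on \emph{valid} inputs drawn from $\mathcal{S}$; it may therefore disregard any prefix $\tau'\notin\mathcal{S}$, which can never be reached, and any continuation $\tau'\notin\mathcal{S}$, which can never occur. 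This is exactly what the added conjuncts $\tau\in\mathcal{S}$ and $\tau'\in\mathcal{S}$ encode, and it is why the invalid-prefix obligation $\neg\Property(\tau')\Rightarrow\tau'.last\in\mathcal{C}$ is only imposed when $\tau'\in\mathcal{S}$.

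For the ($\Leftarrow$) direction I would build an enforcement monitor from $\Phi$, combining the same three strategies used in the uniform case. So long as the ongoing execution together with its controllable buffered suffix still admits a valid $\mathcal{S}$-prefix, the monitor suppresses and buffers in the style of $\mathcal{L}$-Renewal, reinserting the buffered controllable actions once a valid prefix is reached. If the execution has entered the $\mathcal{L}$-Corner case, it aborts after emitting the unique valid extension, which is possible precisely because the residual actions lie in $\mathcal{I}\cup\mathcal{C}$ and the branch point ends in $\mathcal{D}\cup\mathcal{C}$. Otherwise, when the execution is irremediably invalid in the sense of $\mathcal{L}$-Safety, the monitor truncates at the last valid prefix whose following action is in $\mathcal{D}\cup\mathcal{C}$. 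Verifying soundness and transparency then proceeds as in Theorem~\ref{thm:Master}, except that every ``the buffer can eventually be reinserted'' argument is carried out along paths that remain in $\mathcal{S}$; since the actual input is in $\mathcal{S}$ by hypothesis, the witnessing valid prefix $\tau\in\mathcal{S}$ supplied by $\Phi$ is genuinely reachable.

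For the ($\Rightarrow$) direction I would argue by contraposition on the inner biconditional. Suppose some $\sigma\in\mathcal{S}$ has $\Property(\sigma)$ but $\neg\Phi(\sigma)$, or conversely $\Phi(\sigma)$ but $\neg\Property(\sigma)$. In the first case $\sigma$ is a valid $\mathcal{S}$-input that transparency forces the monitor to reproduce exactly, yet the failure of $\Phi$ means that along $\sigma$ the monitor must either emit an invalid prefix or suppress a non-controllable action it cannot later restore, so soundness or transparency fails. In the second case the renewal/corner/safety structure asserted by $\Phi$ licenses the monitor to commit to an output it can no longer retract, forcing it to accept the invalid $\sigma$ unsoundly or to deviate from the required behaviour. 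The single genuine departure from the proof of Theorem~\ref{thm:Master} is that each adversarial witness must be selected from $\mathcal{S}$ rather than from all of $\Sigma^\infty$.

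The main obstacle I anticipate is bookkeeping rather than conceptual novelty: $\Phi$ is a deeply nested quantifier alternation carrying several independent $\mathcal{S}$-relativizations, and the delicate point is checking that each occurrence of $\tau\in\mathcal{S}$ or $\tau'\in\mathcal{S}$ sits on the correct side of its implication, so that the monitor's obligations are neither over- nor under-stated. Misplacing a single $\mathcal{S}$-guard would make the claimed condition either too strong, excluding some enforceable property, or too weak, admitting a non-enforceable one. An alternative worth attempting, which would localize this risk, is a reduction to the uniform Theorem~\ref{thm:Master}: define a property that agrees with $\Property$ on $\mathcal{S}$ and is rendered valid or otherwise inert off $\mathcal{S}$, and show that $\mathcal{L}^{\mathcal{S}}$-enforceability of $\Property$ coincides with ordinary $\Sigma^\infty$-enforceability of the modified property; the subtlety there is handling transparency for sequences outside $\mathcal{S}$, which the direct adaptation sidesteps.
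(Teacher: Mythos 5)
Your plan is the natural one and matches the paper's intent: the paper treats this theorem as the $\mathcal{S}$-relativization of Theorem~\ref{thm:Master}, whose proof is exactly the edit-automaton construction with the three strategies (renewal-style buffering, the corner case, safety truncation) plus a contrapositive argument for the converse, and your sketch reproduces that at essentially the same level of rigor --- including the admittedly informal converse direction, which the paper itself handles no more carefully. Be aware, though, that despite the blanket claim that all proofs appear in Appendix~A, the paper never actually proves this particular theorem: the appendix skips from Theorem~\ref{thm:Master} directly to the Schneider specialization, and the only vestige of the nonuniform case is a commented-out fragment specializing the formula to the observable/suppressible setting, so the only meaningful comparison is against the proof of Theorem~\ref{thm:Master}, which your proposal mirrors.
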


We can now restate the results of previous research in our new formalism.
\begin{thm}
(from \cite{enforceable}) If $\mathcal{L} =\langle\emptyset,\emptyset,\Sigma,\emptyset\rangle$ then $\mathcal{L}^{\Sigma^\infty}$enforceable$_{=}$ is  Safety.
\end{thm}

\begin{thm}
(from \cite{nonsafetyJournal}) If $\mathcal{L} =\langle\emptyset,\emptyset,\emptyset,\Sigma\rangle$ then $\mathcal{L}^{\Sigma^\infty}$enforceable$_{=}$ is the union of Infinite Renewal  and the corner case.
\end{thm}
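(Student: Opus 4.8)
The plan is to obtain this statement as a direct specialization of Theorem~\ref{thm:Master} to the lattice $\mathcal{L}=\langle\emptyset,\emptyset,\emptyset,\Sigma\rangle$, in which $\mathcal{C}=\Sigma$ and $\mathcal{O}=\mathcal{I}=\mathcal{D}=\emptyset$. First I would substitute this lattice into the characterization of Theorem~\ref{thm:Master} and observe that every side condition referring to the lattice collapses. Since $\mathcal{C}=\Sigma$, the predicate $\tau'.last\in\mathcal{C}$ holds for every nonempty $\tau'$, so each guard of the form $\neg\Property(\tau')\Rightarrow\tau'.last\in\mathcal{C}$ is vacuously true; likewise $a\in\mathcal{D}\cup\mathcal{C}=\Sigma$ and $acts(\sigma\backslash\tau';a)\in\mathcal{C}\cup\mathcal{I}=\Sigma$ hold unconditionally. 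Deleting these trivially-satisfied conjuncts reduces the right-hand side to a biconditional of the form $\Property(\sigma)\Leftrightarrow(R(\sigma)\vee CC(\sigma))$, where $R(\sigma)$ abbreviates $\forall\sigma'\preceq\sigma:\exists\tau\preceq\sigma:\sigma'\preceq\tau\wedge\Property(\tau)$ and $CC(\sigma)$ is the residue of the second disjunct.

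Next I would identify the two disjuncts with the two target classes. For $R$, I would argue it is exactly the generalized renewal condition: for infinite $\sigma$ it asserts that arbitrarily long prefixes of $\sigma$ are valid, i.e.\ that $\sigma$ has infinitely many valid prefixes, while for finite $\sigma$ it reduces (taking $\sigma'=\sigma$) to $\Property(\sigma)$ itself, which is precisely why every property over finite sequences lies in infinite renewal. For $CC$, I would simplify the surviving second disjunct; since $\succeq$ is reflexive, the alternative $\Property(\tau)\wedge\neg\exists\tau'\succeq\tau:\Property(\tau')$ is self-contradictory ($\tau'=\tau$ would force both $\Property(\tau)$ and $\neg\Property(\tau)$) and therefore drops out, leaving $\exists\sigma'\preceq\sigma:\forall\tau\succeq\sigma':\Property(\sigma';\tau)\Rightarrow\sigma';\tau=\sigma$ (with $\sigma'=\tau;a$), which is exactly the $\mathcal{L}$-Corner case specialized to $\mathcal{C}=\Sigma$. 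Having matched the disjuncts, I would read the universally quantified biconditional as the set equality $\Property=\mathrm{InfiniteRenewal}\cup\mathrm{CornerCase}$ and note that this agrees with the characterization of Ligatti et al.~\cite{nonsafetyJournal}.

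The step I expect to be the main obstacle is handling the finite-sequence and vacuous-satisfaction corner cases when matching $R$ and $CC$ to their defining classes. Infinite renewal places no constraint on finite sequences, whereas the specialized formula quantifies over all of $\Sigma^\infty$, so I must verify the two directions of the union separately: that $R(\sigma)\vee CC(\sigma)\Rightarrow\Property(\sigma)$, and that $\Property(\sigma)$ forces membership in one of the two classes. The corner-case direction is the delicate one, because its defining implication is satisfied vacuously whenever the chosen prefix $\sigma'$ admits no valid extension at all; I would therefore use the reasonableness of $\Property$ together with the fact that a valid $\sigma$ is itself an extension of each of its prefixes to exclude these spurious witnesses and pin $CC$ to the genuine ``unique valid continuation'' reading intended in the statement.
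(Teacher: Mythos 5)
Your proposal takes essentially the same route as the paper's own proof: both specialize Theorem~\ref{thm:Master} to $\mathcal{C}=\Sigma$, discharge the now-trivial lattice guards ($\tau'.last\in\mathcal{C}$, $a\in\mathcal{D}\cup\mathcal{C}$, $acts(\cdot)\subseteq\mathcal{C}\cup\mathcal{I}$), eliminate the residual disjunct $\Property(\tau)\wedge\neg\exists\tau'\succeq\tau:\Property(\tau')$, and read off the two surviving disjuncts as generalized renewal and the corner case. Your justification for dropping that disjunct (it is self-contradictory by reflexivity of $\succeq$) is in fact cleaner than the paper's rather opaque absorption step, and your explicit treatment of finite sequences and vacuous corner-case witnesses fills in details the paper compresses into ``definition of renewal and of the corner case,'' but these are local refinements of the same argument, not a different proof.
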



\subsection{Additional Results}
\begin{cor}\label{corr:1}
Let $\mathcal{L}$ be a lattice over a set of actions $\Sigma$ as described above and let $\Sigma_1,\Sigma_2\in\{\mathcal{O},\mathcal{D},\mathcal{I},\mathcal{C}\}$ and $\Sigma_1 \sqsubseteq \Sigma_2$.
$\forall a \in \Sigma: \mathcal{L}_{\Sigma_1}$-enforceable$_=$    $ \subseteq \mathcal{L}_{\Sigma_1\xrightarrow{a}\Sigma_2}$-enforceable.
\end{cor}




Corollary \ref{corr:1} indicates that the set of enforceable properties increases monotonically with the capabilities of the monitor. Thus, any effort made to improve the capabilities of the monitor to control its target is rewarded by a augmented set of enforceable security properties. Conversely, if every action from the set $\Sigma$ is in $\mathcal{O}$, only the inviolable property is enforceable.

\begin{cor}
\label{cor:2} Let $\mathcal{L} =\langle\Sigma,\emptyset,\emptyset,\emptyset\rangle, \Property{} \in \mathcal{L}_{\Sigma}$-enforceable $  \Leftrightarrow \Property{} = \Sigma^{\infty}$.
\end{cor}

\section{Alternative Equivalence Relations}\label{sec:equivalence}

The above results apply only to the case of effective$_{=}$ enforcement. This corresponds to the enforcement power of a monitor that  sometimes delays the occurrence of actions in its target, or abort its execution, but does not add additional actions or permanently suppress part of the execution sequence, allowing the remainder of the execution to continue.  Syntactic equality (and subclasses of that relation) is the only equivalence relation that has been extensively studied in the literature. This naturally does not exhaust the enforcement capabilities of monitors.  In this section, we explore alternative equivalence relations (that characterize alternative enforcement mechanisms) and  determine the set of enforceable properties for each.

\subsection{Subword Equivalence and Insertion enforcement}
\label{sect:subword}
The first alternative equivalence relation that we examine is subword equivalence, noted $\cong_{\triangleleft}$. This corresponds to the enforcement power of a monitor which can \textit{add} any action into the execution, but cannot prevent any action from occurring if the target program requests it. For example, the monitor can enforce a property stating that any opened file is eventually closed by adding the missing close file action before the end of the program's execution.  This model is interesting to understand the capabilities of certain types of inline monitors, that are injected into the program in the form of guards or run in parallel with their target, such as those based upon the aspect-oriented programming paradigm \cite{aop,DBLP:journals/toplas/BoddenLH12,DBLP:conf/rv/MeredithR10}.

Let $\sigma, \sigma', \tau\in \Sigma^*$, we write  $\sigma _\tau\cong_{\triangleleft}\sigma'  \Leftrightarrow (\sigma\triangleleft_\Sigma\tau \wedge  \sigma'\triangleleft_\Sigma\tau)$, and designate by enforceable$_{\cong_{\triangleleft}}$ the set of properties that are enforceable when  $_\tau\cong_{\triangleleft}\sigma' $ is the equivalence relation and $\tau$ is the original input. This is a very permissive equivalence relation, which in effect allows the monitor to insert any action or actions into the execution, and consider the transformed execution equivalent to the original  execution  ($\tau$) if all actions performed by the target program are still present.  While this equivalence relation may be too permissive to be realistic, it does allow us to deduce an upper bound to the set of policies enforceable under this paradigm.

We begin by considering the cases that occur when the monitor has only limited control over the action set. When the monitor can only suppress actions or abort the execution, the set of enforceable$_{\cong_{\triangleleft}}$ properties coincides with that of safety properties, since the the monitor is unable to take advantage of the permissiveness of the equivalence relation.
\begin{thm}
Let  $\mathcal{L} =\langle\emptyset,\emptyset,\Sigma,\emptyset\rangle$.   $\mathcal{L}^{\Sigma^\infty}$-enforceable$_{\cong_{\triangleleft}}$ is the set of safety properties.
\end{thm}

Another interesting case occurs when the monitor cannot delay the occurrence of the actions present in the execution sequence, but can only react to them by inserting additional actions afterwards. In other words, any action output by the target program must be immediately accepted, but can be followed by other actions inserted by the monitor. An intuitive lower bound to the set of enforceable properties in this context is the intersection of renewal properties and liveness properties. That is, properties for which any invalid sequence can be corrected into a valid sequence with a finite number of corrective steps.  Additionally, some safety properties and some persistence properties are also enforceable. For example, the property  $\Property_{\neg aa}$ imposing that no two `a' actions occur consecutively is a safety property since an invalid sequence cannot be corrected by the insertion of any subsequent actions. However, the policy is $\mathcal{L}^{\Sigma^\infty}$-enforceable$_{\cong_{\triangleleft}}$ since the monitor can insert any action other than 'a' after the occurrence of each `a'  action to ensure the respect of the security property. What characterizes properties outside the intersection of renewal and liveness that are nonetheless $\mathcal{L}^{\Sigma^\infty}$-enforceable$_{\cong_{\triangleleft}}$  is that there exists a property $\Property'$ in renewal $\cap$ liveness such that the property of interest includes $\Property'$. In the case of the example above, the property  ``any  a action is immediately followed by some action different from a (or the end of sequence token)'' is a renewal property included in $\Property_{\neg aa}$. The monitor can enforce the property $\Property_{\neg aa}$ by enforcing $\Property'$.

\begin{thm}\label{thm:strictInsert}
Let $\Property\subseteq\mathcal{P}(\Sigma^\infty)$ and let $\mathcal{L} =\langle\emptyset,\Sigma,\emptyset,\emptyset\rangle$. If the monitor cannot delay the occurrence of actions performed by the target program, then  $ \Property\in \mathcal{L}^{\Sigma^\infty}$-enforceable$_{\cong_{\triangleleft}}$  $ \Leftrightarrow \exists \Property' : \Property' \subseteq \Property $ and $\Property'$ is infinite renewal $\cap$ liveness.
\end{thm}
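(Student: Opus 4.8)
The plan is to read enforcement in this model as an \emph{online insertion game} and to reduce the biconditional to a single ``no‑deadlock'' condition on \Property. The first observation that makes the reduction possible is that under $\cong_{\triangleleft}$ \emph{transparency is automatic}: since the monitor may only insert, every output is a superword of its input and is therefore $\cong_{\triangleleft}$-equivalent to the original input $\tau$, so the transparency clause imposes nothing. Hence the entire content is soundness, and because the monitor cannot delay, soundness must hold at every point at which the target could halt. Writing $u_i$ for the output committed after reading the first $i$ input actions (the real actions emitted in order, with finite inserted blocks between them), enforcement of \Property\ is equivalent to: for every input, every $u_i\in\Property$, and, when the input is infinite, $\lim_i u_i\in\Property$. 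Since the input ranges over all of $\Sigma^\infty$ (uniform context), this is exactly an online game in which, from any reached valid output $u$ and any adversarially chosen next action $a$, the monitor must append a finite $w$ with $uaw$ valid and the game continuable.

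For the ($\Leftarrow$) direction I would exhibit the steering monitor directly. Given $\Property'\subseteq\Property$ that is infinite renewal $\cap$ liveness (and reasonable, so $\epsilon\in\Property'$), the monitor maintains the invariant that its current output lies in $\Property'$. On reading an action $a$ it first emits $a$; then, applying liveness to the finite word $ua$, it obtains an infinite $\Property'$-valid extension, and applying renewal to that extension it extracts a \emph{finite} valid prefix $uaw\succeq ua$ with $uaw\in\Property'$, which it emits. Every committed output then lies in $\Property'\subseteq\Property$, giving soundness at each halting point; and for an infinite input the output is an increasing union of $\Property'$-valid prefixes, hence has infinitely many valid prefixes, so renewal forces the limit into $\Property'\subseteq\Property$. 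This direction I expect to be routine; the only points to verify are that renewal does yield a \emph{finite} correcting insertion (not merely an infinite extension) and that the invariant is genuinely re-established after each step.

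The ($\Rightarrow$) direction is where the work lies, and I would construct the witness $\Property'$ from the enforcing monitor $M$. The natural candidate is the set of $M$'s committed outputs together with their infinite limits. Soundness gives $\Property'\subseteq\Property$ at once. Renewal of $\Property'$ should follow from prefix-monotonicity of an online monitor: along an infinite run the committed outputs form an increasing chain of $\Property'$-members converging to the limit, so each infinite member has infinitely many valid prefixes, while conversely a chain of output-prefixes of a fixed $\sigma$ coheres, by determinism of $M$, to exhibit $\sigma$ itself as a limit output. The main obstacle is \emph{liveness}: I must show every finite word $s$ extends to a valid $\Property'$-sequence. The difficulty is that feeding $s$ to $M$ interleaves insertions with the letters of $s$, so $M(s)$ contains $s$ only as a subword and not as a prefix, and hence does not directly witness prefix-extendability for $s$. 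The real content of the argument is to convert the monitor's ``never gets stuck'' guarantee --- that from any reached output and any next action it can re-establish validity --- into a genuine prefix-extension property, which requires driving $M$ on judiciously chosen continuations and reindexing its output so that the required prefix is recovered. Pinning down this reindexing, and checking that the resulting family of extensions is closed as renewal and liveness demand, is the crux; the property $\Property_{\neg aa}$ is the guiding case, with correcting core ``every $a$ is immediately followed by a non-$a$ action.''
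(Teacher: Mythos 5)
Your ($\Leftarrow$) direction is essentially the paper's own argument: maintain the invariant that the committed output lies in $\Property'$; on each input action $a$, use liveness to obtain a valid extension of the current output followed by $a$, and use renewal to cut that extension down to a \emph{finite} valid prefix, which is emitted as the correcting block; soundness follows from $\Property'\subseteq\Property$ and transparency is automatic because the output contains the input as a subword. This half is fine, and is in fact more careful than the paper's version, which does not spell out why the limit of an infinite run is valid (your appeal to the renewal condition on infinite sequences).

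The genuine gap is in your ($\Rightarrow$) direction: you name the crux --- proving that your candidate $\Property'$ (the committed outputs of $M$ together with their limits) is a \emph{liveness} property --- and then defer it to an unspecified ``reindexing.'' This cannot be repaired as stated, because the obstacle you noticed is fatal for that candidate, and indeed for any candidate. Take the paper's own running example $\Property_{\neg aa}$ (no two consecutive $a$'s), which is enforceable in this model by the monitor that emits each input action and inserts some $b\neq a$ after every $a$. The finite word $aa$ has no valid extension at all in $\Property_{\neg aa}$, hence none in any subset of it; so \emph{no} $\Property'\subseteq\Property_{\neg aa}$ satisfies the paper's liveness definition (every word of $\Sigma^*$ extends to a member), even though the property is enforceable. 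What the set of committed outputs actually satisfies is the weaker ``steerability'' condition your game formulation isolates: $\epsilon\in\Property'$, from every $u\in\Property'$ and every $a\in\Sigma$ some finite $w$ gives $u;a;w\in\Property'$, plus limit-closure. Your argument would go through with that condition in place of renewal $\cap$ liveness, but that proves a different (corrected) statement, not the theorem as written. For comparison, the paper proves this direction by a quite different route: a contrapositive case analysis over the safety--progress classes (safety and persistence), claiming that absent a renewal $\cap$ liveness subset the monitor must eventually reach an output it can neither soundly nor transparently extend --- an informal argument that glosses over exactly the difficulty your attempt exposes (the $\neg aa$ monitor never gets stuck, yet no liveness subset exists). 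So as submitted, the forward implication remains unproven.
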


Allowing the monitor to insert a finite number of actions  before or after an action taken by the target program increases the set of enforceable  properties further. Consider for example the property \Property$_{os}$  stating that a \textit{write} action only be performed on a previously opened file. The property is a safety property, and falls outside the set of $ \Property\in \mathcal{L}^{\Sigma^\infty}$-enforceable$_{\cong_{\triangleleft}}$ properties defined in theorem \ref{thm:strictInsert} if the number of files is infinite. The property can be  $\mathcal{L}^{\Sigma^\infty}$-enforced$_{\cong_{\triangleleft}}$  by a monitor that inserts the corresponding \textit{open} file action anytime the target program attempts to write to a file that has not yet been opened. More generally, if all actions are in the set $\mathcal{C}$, then a property $\Property$ is enforceable iff there exists a property $\Property'\subseteq\Property$, s.t. $\Property'$ is in renewal and Liveness and for any action $a \in \Sigma$, and for any sequence $\sigma$ in $\Property'$, there is a sequence $\tau$ in the residual of $\Property'$ with regard to $\sigma$ s.t. $a\in acts(\sigma)$. In other words, if for any sequence in the subproperty $\Property'$ and for any possible action $a$, there is a continuation $\tau$ s.t. $\tau$ contains $a$. The monitor can $\mathcal{L}^{\Sigma^\infty}$-enforces$_{\cong_{\triangleleft}}$ such a property by appending a sequence from the residual containing any action requested by the target program.

\begin{thm}
Let  $\mathcal{L} =\langle\emptyset,\Sigma,\emptyset, \emptyset\langle$.   $\mathcal{L}^{\Sigma^\infty}$-enforceable$_{\cong_{\triangleleft}}$ iff there exists a property $\Property' \subseteq   \Property \cap Liveness\cap Renewal:  \forall \sigma\in\Property':  \underset{ \sigma'\in res_{\Property'}(\sigma) }{{\bigcup}}   acts(\sigma')=\Sigma$.
\end{thm}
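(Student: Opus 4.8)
The plan is to prove the two directions of the equivalence separately, reading the characterizing condition on $\Property'$ as the specification of an insertion monitor: a witness subproperty from which every action the target may produce can be absorbed by inserted actions while staying valid.

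For sufficiency ($\Leftarrow$) I would assume a witness $\Property'\subseteq\Property$ that is in Liveness $\cap$ Renewal and satisfies $\bigcup_{\sigma'\in res_{\Property'}(\sigma)} acts(\sigma')=\Sigma$ for every $\sigma\in\Property'$, and build a monitor maintaining the invariant that after each consumed target action the emitted sequence lies in $\Property'$. First I would use Liveness to emit an initial block reaching some valid $v\in\Property'$ (taking $v=\epsilon$ when $\epsilon\in\Property'$). Then, whenever the target produces an action $a$ with current output $\sigma\in\Property'$, the coverage hypothesis yields $\tau\in res_{\Property'}(\sigma)$ with $a\in acts(\tau)$; writing $\tau=w;a;\tau'$ for the first occurrence of $a$, the monitor inserts $w$ before $a$, emits $a$, then inserts $\tau'$, so the new output $\sigma;\tau\in\Property'$ and $a$ survives as a subword. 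I would then verify the two requirements: transparency holds because the monitor never suppresses, so the input is always a subword of the output (hence $\cong_{\triangleleft}$-equivalent relative to the input whenever the input is valid); soundness holds because on finite inputs the last emitted sequence is in $\Property'\subseteq\Property$, while on infinite inputs the emitted sequence has infinitely many valid prefixes $\sigma_0\preceq\sigma_1\preceq\cdots$, so Renewal forces the limit into $\Property'\subseteq\Property$.

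For necessity ($\Rightarrow$) I would let $M$ enforce $\Property$ and set $\Property'$ to be the set of complete outputs $M$ produces on finite inputs; soundness gives $\Property'\subseteq\Property$. Since all actions are insertable only ($\mathcal{D}=\mathcal{C}=\emptyset$), $M$ can neither suppress nor abort, and since it may insert only finitely many actions immediately before or after each consumed action, its committed output is monotone in the input and contains every consumed action in order. From this I would derive coverage: given $\sigma=M(u)\in\Property'$ and any $a\in\Sigma$, feeding $u;a$ yields $M(u;a)=\sigma;\rho\in\Property'$ with $a\in acts(\rho)$, so $\rho\in res_{\Property'}(\sigma)$ witnesses $a$. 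I would obtain the Liveness and Renewal membership of $\Property'$ from, respectively, the fact that any finite sequence fed to $M$ is corrected into a valid output by a finite block of inserted actions, and the fact that soundness on infinite inputs makes every infinite output the cofinal limit of valid finite outputs while any sequence retaining only finitely many valid prefixes is rejected.

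The hard part will be the necessity direction, and specifically pinning down $\Property'$ so that it lies in Liveness $\cap$ Renewal \emph{and} has the coverage property. The delicate points are: (i) reconciling the subword/insertion nature of $\cong_{\triangleleft}$ with the prefix-based definitions of Liveness and Renewal, i.e.\ arguing that ``correctable by finitely many inserted actions'' is exactly what these classes capture for the constructed $\Property'$; (ii) exploiting the online, non-retracting behaviour of $M$ together with the no-suppression constraint to guarantee that committed outputs grow monotonically and never drop a consumed action, which is what makes the residual witnesses for coverage well defined; and (iii) handling infinite executions, where Renewal must transfer validity from the cofinal family of valid finite outputs to their limit and exclude spurious invalid limits. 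I expect (i) and (iii) to demand the most care.
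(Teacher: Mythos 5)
Your sufficiency direction is essentially the paper's own argument: keep the emitted sequence inside $\Property'$, and when the target produces $a$, use the coverage hypothesis to pick $\tau\in res_{\Property'}(\sigma)$ with $a\in acts(\tau)$ and weave $a$ into that block. You are in fact more careful than the paper, which omits both the initial step (reaching some member of $\Property'$ from $\epsilon$) and the Renewal-based argument that soundness survives in the limit on infinite inputs.

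The genuine gap is in your necessity direction, and it is exactly your worry (i), which your plan does not resolve. Taking $\Property'$ to be the set of complete outputs of $M$ on finite inputs does not yield a Liveness property: Liveness quantifies over \emph{all} of $\Sigma^*$ and demands that every finite word have a prefix-extension (a suffix appended to it) inside $\Property'$, whereas your justification only shows that every finite word becomes a member of $\Property'$ after \emph{insertions}, i.e.\ is a subword of a member. These closures come apart precisely because, in this theorem's setting, $M$ may insert actions \emph{before} the input's actions. Concretely, let $\Property=\{\epsilon\}\cup\{a;\sigma : \sigma\in\Sigma^\infty\}$ (``every nonempty execution begins with $a$''). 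This is enforceable$_{\cong_{\triangleleft}}$ by inserting $a$ up front and echoing, and your $\Property'$ is then $\{\epsilon\}\cup\{a;u : u\in\Sigma^*\}$; but the word $b$ has no extension lying in that set, so $\Property'$ is not Liveness (as a set of finite words it also violates Renewal, since infinite limits of outputs have infinitely many prefixes in $\Property'$ without belonging to it --- that part you could fix by adjoining limits, but the Liveness failure you cannot). Worse, no alternative choice of witness repairs this: \emph{every} subset of this $\Property$ fails Liveness, even though $\Property$ is enforceable, so the converse cannot be completed under the literal prefix-based definitions of Liveness and Renewal. Be aware that the paper itself does not close this gap: its converse is a two-line contrapositive that engages only the coverage clause (``some reachable valid $\tau$ and action $a$ admit no valid finite extension containing $a$'') and tacitly reads ``Liveness $\cap$ Renewal'' as ``correctable by finitely many insertions,'' i.e.\ with subword-extension rather than prefix-extension. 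Your honest attempt to verify the stated classes therefore runs into a mismatch between the theorem's formal statement and its intended content; making it work requires reformulating the classes (subword-based liveness/renewal), which is a change to the statement, not a missing step you can supply.
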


The upper bound naturally occurs when all actions are controllable. We found it harder to give a specific upper bound to the set of enforceable$_{\cong_{\triangleleft}}$ properties.  Indeed, this set seems to include almost every properties, with the exception of a number of hard to define special cases. Observe that because the subword relation is reflexive, a monitor that enforces$_=$ the property also enforces$_{\cong_{\triangleleft}}$ it. This equivalence relation is so permissive that only a few very particular cases seem unenforceable.

Observe that while this result indicates that the  security properties that are $\mathcal{L}^{\Sigma^\infty}$-enforceable$_{\cong_{\triangleleft}}$ are largely the same as those that are $\mathcal{L}^{\Sigma^\infty}$-enforceable$_{=}$, the manner of enforcement is quite different. $\mathcal{L}^{\Sigma^\infty}$-enforcement$_{=}$ guarantees that  any valid sequence is output \textit{as is}, without any modification by the monitor. For invalid sequences, $\mathcal{L}^{\Sigma^\infty}$-enforceable$_{=}$ ensures that the longest valid prefix is always output \cite{WhatMeanArt}.   $\mathcal{L}^{\Sigma^\infty}$-enforcement$_{\cong_{\triangleleft}}$ does not provide these guarantees. Instead, as seen above, for non-safety properties $\mathcal{L}^{\Sigma^\infty}$-enforcement$_{\cong_{\triangleleft}}$ can ensure that any action present in the original program is eventually output, whether the execution sequence is valid or not. $\mathcal{L}^{\Sigma^\infty}$-enforcement$_{\cong_{\triangleleft}}$ also evidently has a much reduced memory overhead, since this enforcement paradigm does not impose on the monitor that it keep in memory an indefinitely long segment of the execution trace, as $\mathcal{L}^{\Sigma^\infty}$-enforcement$_=$ does. Since  memory constraints were showed in \cite{fong} to significantly affect the set of enforceable properties it is certain that once such constraints are taken into account, some properties will be found to be $\mathcal{L}^{\Sigma^\infty}$-enforceable$_{\cong_{\triangleleft}}$ but not $\mathcal{L}^{\Sigma^\infty}$-enforceable$_=$.

\subsection{Suppression Enforcement}

Another interesting enforcement paradigm is suppression enforcement \cite{MoreEnforce}, which occurs when the monitor can remove potentially malicious actions  performed by the target program but cannot add any action to the execution. In that case, the monitor's enforcement is bounded by the reverse subword equivalence, meaning that the monitor's output is a subword of the original sequence.  This enforcement paradigm is similar to the one described in \cite{runtimeResults}, where the monitor is interposed between the target program and the system. Any action requested by the target program is intercepted by the monitor which must accept or reject it, and allows us to pose an upper bound to this enforcement paradigm.

Let $\sigma, \sigma', \tau\in \Sigma^*$,we write  $\sigma _\tau\cong_{\triangleright}\sigma'  \Leftrightarrow (\tau\triangleleft_\Sigma\sigma \wedge  \tau\triangleleft_\Sigma\sigma')$. As was the case in section \ref{sect:subword}, this is a very permissive equivalence relation, which characterizes the behaviour of a monitor that can potentially \textit{suppress} any action or actions from the execution.  We write $\cong_{\triangleright}$ for this equivalence relation where $\tau$ is the original input.

We begin by determining an upper bound to the set, which occurs when every action is suppressible. In that case, every reasonable property is enforceable, simply by always outputting the empty sequence (or possibly the longest valid prefix). While this may not be a particularly useful enforcement, it does however serve as a useful upper bound to begin reflecting about the capabilities of different types of monitors. It also argues for a stronger notion of transparency, as discussed in\cite{KT2012}, which allows us to reason about the capabilities of monitor in a context that is more similar to that of a real-life monitor. Such monitors would normally be bounded with respect to the alterations that they are allowed to performed on valid and invalid executions alike.

\begin{thm}\label{thm:total}
If $\mathcal{L} =\langle\emptyset,\emptyset,\Sigma,\emptyset\rangle$ then $\mathcal{L}^{\Sigma^\infty}$-enforceable$_{\cong_{\triangleright}}$ = $\mathcal{P}(\Sigma^\infty)$.
\end{thm}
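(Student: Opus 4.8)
The plan is to establish the two inclusions separately. The inclusion $\mathcal{L}^{\Sigma^\infty}$-enforceable$_{\cong_{\triangleright}} \subseteq \mathcal{P}(\Sigma^\infty)$ is immediate, since by definition every enforceable property is a subset of $\Sigma^\infty$; so the entire content of the theorem lies in the reverse inclusion, namely that \emph{every} reasonable property $\Property \subseteq \Sigma^\infty$ is $\mathcal{L}^{\Sigma^\infty}$-enforceable$_{\cong_{\triangleright}}$ when $\mathcal{L} = \langle\emptyset,\emptyset,\Sigma,\emptyset\rangle$. First I would fix an arbitrary such $\Property$ and exhibit a single enforcement mechanism that works for it.

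The witness I would use is the degenerate monitor $M$ that suppresses every action it reads, so that its output on any input is $\epsilon$. This monitor is admissible in the lattice $\langle\emptyset,\emptyset,\Sigma,\emptyset\rangle$ precisely because $\Sigma = \mathcal{D}$: every action is suppressible, so deleting the current action is always a permitted move and the monitor is never forced to commit anything to its output.

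Next I would check the two effective$_{\cong_{\triangleright}}$ enforcement criteria. \textbf{Soundness} is a one-line appeal to reasonableness: the only sequence $M$ ever emits is $\epsilon$, and since we restrict attention to reasonable properties we have $\Property(\epsilon)$, hence every output respects $\Property$. \textbf{Transparency} is the step that actually uses the shape of $\cong_{\triangleright}$. Here I would argue that whenever the input $\sigma$ is valid, the output $\epsilon$ lies in the $\cong_{\triangleright}$-equivalence class of $\sigma$: under suppression equivalence a sequence is deemed equivalent to the original input as soon as it can be obtained from it by suppressing actions, and $\epsilon$ --- obtained from $\sigma$ by suppressing all of them --- qualifies for every $\sigma$ (formally $\epsilon \triangleleft_{\Sigma} \sigma$ holds for all $\sigma$). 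Thus $\epsilon {}_\sigma\cong_{\triangleright} \sigma$ holds unconditionally and transparency is satisfied. Since $\Property$ was arbitrary, this yields $\mathcal{P}(\Sigma^\infty) \subseteq \mathcal{L}^{\Sigma^\infty}$-enforceable$_{\cong_{\triangleright}}$ and closes the argument.

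The only step requiring genuine care is transparency, since it is the sole place where the permissiveness of $\cong_{\triangleright}$ is invoked; one must make sure that this relation really does place $\epsilon$ in the class of \emph{every} valid input and not merely in the class of $\epsilon$ itself. Everything else is routine. If one wishes to avoid a purely trivial witness and present the result as a meaningful upper bound, the same argument carries through verbatim with $M$ replaced by the monitor that emits the longest valid prefix of its input: that output is again a subword of the input, hence still $\cong_{\triangleright}$-equivalent to it on valid sequences, while giving a slightly more informative enforcement.
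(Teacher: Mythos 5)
Your proposal is correct and follows essentially the same route as the paper's own proof: the paper likewise observes that $\epsilon$ is valid (by reasonableness) and $\cong_{\triangleright}$-equivalent to every sequence, so the monitor that suppresses everything and always outputs $\epsilon$ enforces any property, with the longest-valid-prefix variant mentioned as an alternative. Your write-up merely makes explicit the soundness/transparency check and the admissibility of total suppression under $\Sigma = \mathcal{D}$, which the paper leaves implicit.
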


\begin{cor}
Let  $\mathcal{L} =\langle\emptyset,\emptyset,\Sigma,\emptyset\rangle$. $\forall a\in \mathcal{L_{D}}: \mathcal{L}^{\Sigma^\infty}$-enforceable$_{\cong_{\triangleright}}$ = $\mathcal{L}^{\mathcal{S}}_{\mathcal{D}\xrightarrow{a}\mathcal{C}}$-enforceable$_{\cong_{\triangleright}}$.
\end{cor}

\begin{cor}
Let  $\mathcal{L} =\langle\emptyset,\emptyset,\Sigma,\emptyset\rangle$. $\forall \mathcal{S}\subseteq \Sigma^\infty: \mathcal{L}^{\Sigma^\infty}$-enforceable$_{\cong_{\triangleright}}$ =
$\mathcal{L}^{\Sigma^\infty}$-en\-for\-ceable$_{\cong_{\triangleright}}$.
\end{cor}

A more interesting characterization occurs when we consider that some subset of $\Sigma$ is unsuppressible. These may be actions that the monitor lacks the ability to suppress (such as internal system computations) or actions that cannot be deleted without affecting the functionality of the target program. When that is the case, a property is enforceable iff every invalid sequence has a valid prefix ending on an action in $\mathcal{D}$.  In that case, the property of interest includes a safety property that can be enforced by truncation.
\begin{thm}
If $\mathcal{L} =\langle\mathcal{O},\emptyset,\mathcal{D},\emptyset\rangle$ then $\forall\sigma\in \Sigma^\infty: \Property(\sigma)\in \mathcal{L}^{\Sigma^\infty}$-enforceable$_{\cong_{\triangleright}}\Leftrightarrow\exists \Property'\subseteq\Property:\Property' \in$ $\mathcal{L}$-safety.
\end{thm}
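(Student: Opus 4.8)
The plan is to exploit the fact that, under $\cong_{\triangleright}$, transparency is essentially free: since any subword of the input is $\cong_{\triangleright}$-equivalent to it, a monitor that only ever \emph{deletes} actions automatically preserves the equivalence class of every valid input. Consequently the whole content of the theorem lies in soundness together with the online constraint that observable actions in $\mathcal{O}$ cannot be suppressed (they must be emitted as soon as they are read), whereas actions in $\mathcal{D}$ may be dropped. I would therefore reformulate the claim as: $\Property$ admits a sound online suppression strategy retaining every observable action iff some $\Property' \subseteq \Property$ is $\mathcal{L}$-safety.

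For the $(\Leftarrow)$ direction I would build the monitor explicitly from the witnessing $\Property' \subseteq \Property$. The monitor maintains the invariant that its emitted output stays in the prefix-closure $\text{pref}(\Property')$ of $\Property'$ (the finite sequences extendable to a member of $\Property'$): on reading an observable action it emits it; on reading a deletable action it emits it when the result is still in $\text{pref}(\Property')$, and suppresses it otherwise. The key lemma is that emitting an observable action can never leave the prefix-closure. If $\tau \in \text{pref}(\Property')$ but $\tau;b \notin \text{pref}(\Property')$ with $b \in \mathcal{O}$, then $\tau;b$ is an irreversible point of no return, and applying $\mathcal{L}$-safety (with $\mathcal{C}=\emptyset$) to any $\sigma \succeq \tau;b$ exhibits that point as ending on a $\mathcal{D}$-action — contradicting $b \in \mathcal{O}$, because once a sequence leaves $\text{pref}(\Property')$ it cannot re-enter, so the point of no return is unique. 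Thus the invariant is maintained, every finite output lies in $\text{pref}(\Property')$, and by the limit-closure built into $\mathcal{L}$-safety every total output (finite or infinite) lies in $\Property' \subseteq \Property$, giving soundness.

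For the $(\Rightarrow)$ direction I would take an enforcing monitor $M$ and let $\Property'$ be the limit-closure of the set of output prefixes that are safely reachable, i.e.\ prefixes from which every continuation forced by observable actions keeps $M$'s output extendable to a valid sequence; soundness of $M$ gives $\Property' \subseteq \Property$. That $\Property'$ is $\mathcal{L}$-safety is then immediate from its definition as the limit-closure of a prefix set, once one shows that each bad prefix (point of no return) ends on a deletable action. This last fact is the heart of the argument and uses the same observable-forcing observation in reverse: were a point of no return reached through an observable action $a$, then $M$, having already committed the preceding output $\tau$, would be obliged to emit $a$ too, producing $\tau;a$ with no valid continuation and hence violating soundness on some extension; therefore $a \in \mathcal{D}$ and the preceding $\tau$ is a genuine resting point with $\Property'(\tau)$.

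The main obstacle I anticipate is making $\Property'$ precise in the forward direction so that it is exactly an $\mathcal{L}$-safety property rather than merely a subset of $\Property$: the naive choice $\Property' = \{M(\sigma): \sigma \in \Sigma^\infty\}$ fails, since a prefix may be extendable to an output without itself being a complete output, which would leave the requirement $\Property'(\tau)$ at truncation points unmet. Handling this correctly requires the prefix/limit-closure bookkeeping above, careful treatment of the online causality constraint (a deletable action must be accepted or suppressed before the next action is seen), and a separate limit argument certifying that any sequence all of whose finite prefixes remain in $\text{pref}(\Property')$ is itself valid — which is precisely where the safety (limit-closed) character of $\Property'$ is used.
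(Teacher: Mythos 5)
Your ``if'' direction is essentially sound: the key lemma that emitting an observable action can never leave $\mathrm{pref}(\Property')$ does follow from $\mathcal{L}$-safety exactly as you argue, and your invariant-maintaining suppression monitor is a legitimate (indeed more careful) elaboration of the paper's one-line truncation argument; the two constructions differ only in that the paper aborts at the deletable point of no return while you suppress it and continue. The genuine gap is in your ``only if'' direction. You define $\Property'$ as the limit-closure of the safely reachable output prefixes of $M$ and assert that soundness of $M$ gives $\Property'\subseteq\Property$. It does not: soundness constrains the \emph{outputs} of $M$, but says nothing about \emph{limits of chains of outputs}, and those limits are exactly what your closure operation adds. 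Concretely, let $\mathcal{O}=\{o\}$, $\mathcal{D}=\{d,e\}$, and let $M$ suppress every $e$, pass every $o$, and pass a $d$ only while the number of $d$'s passed so far is smaller than the number of leading $e$'s of the input. Every finite $d^m$ is a complete output (on input $e^m d^m$), and every observable-forced continuation of it stays extendable to a valid output (e.g.\ $d^m o^\omega$), so every $d^m$ is ``safely reachable'' in your sense; yet no single input produces $d^\omega$, since on any one run the number of $d$'s passed is bounded by the finite number of leading $e$'s. Taking $\Property$ to be the exact output set of $M$, the property is enforceable, but your $\Property'$ contains $d^\omega\notin\Property$. Your closing remark that the missing limit step is ``precisely where the safety character of $\Property'$ is used'' is circular: that safety character is the very thing being constructed, and the example shows that no bookkeeping over output prefixes can deliver it.

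The repair is to abandon the output-based construction altogether. If $\Property$ is enforceable, then on a pure-observable input $\rho\in\mathcal{O}^\infty$ the monitor can neither suppress, insert, nor abort, so it must output $\rho$ verbatim, and soundness forces $\mathcal{O}^\infty\subseteq\Property$. But $\mathcal{O}^\infty$ is itself an $\mathcal{L}$-safety property: any sequence outside it has a first action in $\mathcal{D}$, and that occurrence is a point of no return ending on a deletable action. Hence $\Property'=\mathcal{O}^\infty$ is always a witness (indeed any $\mathcal{L}$-safety property must contain $\mathcal{O}^\infty$, since a pure-observable sequence admits no deletable point of no return). This is also the observation that rescues the paper's own proof of this direction, which argues by contraposition that some invalid sequence admits no valid prefix ending before a deletable action: for the argument to defeat monitors that suppress mid-stream rather than merely abort, the offending sequence should be taken in $\mathcal{O}^\infty\setminus\Property$.
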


\section{Conclusion}\label{sec:conclusion}

In this paper, we reexamined the delimitation of enforceable properties by monitors, and proposed a finer characterization that distinguishes between actions that are only observable, actions which the monitor can delete but not insert into the execution, actions which the monitor can insert in the execution but not suppress if they are already present and completely controllable actions. Our study is a generalization of previous work on the same topic, and provides a finer characterization of the set of security properties that are enforceable by monitors in different contexts.

Additionally, we explored the set of properties that are enforceable by the monitor is given broader latitude to transform valid sequences, rather than be bounded to return a syntactically identical execution sequence  if the original execution is valid. We argue that our results point to the need for an alternative definition of enforcement.

Part of the reason the sets of $\mathcal{L}^{\Sigma^\infty}-$enforceable$_{\cong_{\triangleright}}$ and $\mathcal{L}^{\Sigma^\infty}-$enforceable$_{\cong_{\triangleright}}$ properties are so large is that, in the definition of effective$_{\cong}$ enforcement, the transparency requirement is so weak. This leads to monitors with unusually broad licence to alter invalid sequences in order to correct them.  For monitors with broad capabilities to add and remove actions from the execution sequence, the desired behaviour of real-life security policy enforcement mechanism would be more accurately characterized by a more constraining definition of \textit{enforcement}.  For example, a practical suppression monitor should be bounded to remove from an invalid execution sequence only those actions that violate the security policy. Any valid behaviour present in an otherwise invalid sequence should be preserved.

In the future, we would like to consider that the cost of inserting or deleting an action may not be the same for all actions.  This would allow us to contrast different enforcement strategies for the same security property. A lattice-based framework is well-suited to model such a restriction.

\bibliographystyle{abbrv}
\bibliography{report}
\newpage
\appendix
\section{Proofs} \label{App:AppendixA}
\setcounter{thm}{0}
\setcounter{cor}{0}
\begin{thm}
\begin{multline*}
\Property{} \in \mathcal{L}^{\Sigma^\infty}-\mbox{enforceable}_{=}  \Leftrightarrow \forall \sigma\in \Sigma^\infty:\\
(\Property(\sigma)\Leftrightarrow \forall\sigma'\preceq\sigma:\exists \tau\preceq\sigma:\sigma'\preceq\tau\wedge\Property(\tau)\wedge (\forall\tau'\preceq\tau:\neg\Property(\tau')\Rightarrow \tau'.last\in \mathcal{C})\vee \\
(\exists \tau;a\preceq\sigma: a\in \langle \mathcal{D} \cup \mathcal{C} \rangle \wedge \forall\tau'\preceq\tau;a:\neg\Property(\tau')\Rightarrow \tau'.last\in \mathcal{C} \wedge
(\forall \tau'\succeq\tau;a:\Property(\tau')\Rightarrow \\
\tau'=\sigma  \vee  (\Property(\tau)\wedge \neg\exists \tau'\succeq \tau:\Property(\tau') \wedge  acts(\sigma \backslash \tau' ;a) \in \langle C \cup I \rangle ) )) )
\end{multline*}
\end{thm}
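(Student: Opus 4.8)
The plan is to prove the biconditional in Theorem~\ref{thm:Master} by establishing each direction separately, grounding the argument in the three enforcement strategies ($\mathcal{L}$-Renewal, $\mathcal{L}$-Corner case, and $\mathcal{L}$-Safety) already introduced in the excerpt. The key observation is that the right-hand side of the theorem is precisely an amalgamation of these three conditions, arranged so as to permit a monitor to combine them: begin in a renewal-style phase (suppressing and later reinserting controllable actions), then optionally transition---upon reaching an action in $\mathcal{D}\cup\mathcal{C}$---into a corner-case or safety-style abortion. So the first step is to read the nested quantifier structure carefully and attach each clause to one of the three strategies, confirming that the innermost disjunction ($\tau'=\sigma$ versus the corner-case conjunct with $acts(\sigma\backslash\tau';a)\in\mathcal{C}\cup\mathcal{I}$) correctly encodes the two ways an execution may terminate after the decision point.

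For the ($\Leftarrow$) direction (sufficiency), I would construct an explicit monitor and show it enforces$_=$ the property. The monitor operates in two modes. In the \emph{buffering} mode it suppresses the input while searching for a valid prefix $\tau$ all of whose intermediate invalid prefixes end on a controllable action (so that the suppression-and-reinsertion is legitimate); when such a $\tau$ is found it flushes the buffer, exactly as in Ligatti's renewal argument. Upon encountering an action $a\in\mathcal{D}\cup\mathcal{C}$ at the decision point $\tau;a$, the monitor may switch to \emph{abortion} mode: either it outputs the unique valid extension $\sigma$ (the corner case, justified because the remaining actions lie in $\mathcal{I}\cup\mathcal{C}$ and are thus insertable, while $a$ being in $\mathcal{D}\cup\mathcal{C}$ means the execution could be aborted there), or it truncates because no valid extension remains (the safety case). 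I would then verify soundness (every output sequence satisfies $\Property$) and transparency (every valid input is reproduced verbatim) by case analysis on which branch of the disjunction the input sequence satisfies.

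For the ($\Rightarrow$) direction (necessity), I would argue the contrapositive: assuming $\Property$ is enforceable$_=$ by some monitor, I would show the right-hand predicate must hold for every $\sigma$. The core technique is to exploit the constraints imposed by the lattice on which actions the monitor may suppress or insert. Soundness forces the monitor to never output an invalid sequence it cannot later correct, and transparency forces it to reproduce valid inputs; combining these with the fact that the monitor can only suppress actions in $\mathcal{D}\cup\mathcal{C}$ and only insert actions in $\mathcal{I}\cup\mathcal{C}$, I would derive that each invalid intermediate prefix the monitor holds in its buffer must end on a controllable action, and that any deviation from the input at a decision point requires that point to carry an action in $\mathcal{D}\cup\mathcal{C}$ and the inserted continuation to lie in $\mathcal{I}\cup\mathcal{C}$. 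This pins down exactly the clauses of the predicate.

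The main obstacle I anticipate is the necessity direction, specifically disentangling the interaction between the renewal phase and the corner/safety phase. The subtlety the authors flag---that the characterization is ``not simply the conjunction of the three preceding sets''---means I cannot treat the strategies independently: the monitor may suppress a controllable prefix \emph{and then} abort, so the analysis must track the monitor's buffer contents across the mode transition and argue that the action-set membership conditions ($\tau'.last\in\mathcal{C}$ during buffering, $a\in\mathcal{D}\cup\mathcal{C}$ at the decision point, $acts(\sigma\backslash\tau';a)\in\mathcal{C}\cup\mathcal{I}$ for the inserted suffix) are jointly necessary. Carefully formalizing the monitor's configuration as a (buffer, output) pair and reasoning about the permissible transitions under the lattice constraints is where the real work lies; the renewal and safety sub-arguments themselves are close to the cited results and should transfer with only notational adaptation.
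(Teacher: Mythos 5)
Your proposal follows essentially the same route as the paper's own proof: for sufficiency, the paper also constructs an explicit edit-automaton whose states are (output, suppressed-buffer) pairs, which suppresses until a valid prefix is reached, detects the corner case via an auxiliary function, and aborts otherwise, verifying soundness, transparency, and lattice-compliance as inductive invariants; for necessity, the paper argues the contrapositive by characterizing the offending sequences (an invalid prefix not ending in $\mathcal{C}$ that the monitor can neither accept nor suppress-and-reinsert), which is the same constraint analysis you describe. Your outline is sound and anticipates the right difficulty (the interleaving of the buffering phase with the corner/safety abortion), so it matches the paper's argument in both structure and substance.
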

\begin{proof}
(if direction) We construct an edit-automaton $\mathcal{A}= \langle Q, q_0, \delta\rangle$ that effectively$_{=}$ enforces the property \Property. This automata is defined as follows:
\begin{itemize}
  \item $Q= \Sigma^* \times \Sigma^*$, i.e., each state consists in a pair of finite sequences, the sequence output so far and the suffix of the input currently suppressed.
  \item $q_0=(\epsilon,\epsilon)$,
  \item the transition function $\delta:Q\times \Sigma\times Q$ is defined as \\
  $\delta((\sigma_o,\sigma_s),a)=\left\{
                                    \begin{array}{ll}
                                      (\sigma_o,\sigma_s;a), & \hbox{if } \neg\Property(\sigma_o;\sigma_s;a) \wedge \gamma(\sigma_o;\sigma_s;a)=\bot \\
                                      (\sigma_o;\sigma_s;a,\epsilon), & \hbox{}\Property(\sigma_o;\sigma_s;a)\\
                                       (\sigma_o;a',\epsilon), & \hbox{} \neg\Property(\sigma_o;\sigma_s;a) \wedge \gamma(\sigma_o;\sigma_s;a)=a'.
                                    \end{array}
                                  \right.$
    \\where $\gamma:\Sigma^*\times \Sigma\cup \bot$ is a function defined as follows \\
    $\gamma(\sigma)=\left\{
                    \begin{array}{ll}
                      a, & \hbox{}\exists a;\tau\in\Sigma^\infty: \forall \sigma'\succeq\sigma: \Property\sigma' \Rightarrow \sigma'=a;\tau\wedge acts(\tau)\in \{\mathcal{I,C}\} ; \\
                      \bot, & \hbox{otherwise.}
                    \end{array}
                  \right.
    $
    Informally, this transition function states that the monitor suppresses the execution until a valid prefix is encountered, at which point it outputs the suffix of the execution it has suppressed so far. Function $\gamma$ detects the occurrence of the corner case described above.

\end{itemize}
Let $\sigma \in \Sigma^\infty$ be the input sequence and let $\sigma[..i]$ be the prefix of the input sequence processed at step $i$, and let $q=\langle\sigma^i_o, \sigma^i_s\rangle$ be the state of $\mathcal{A}$  reached after processing $\sigma[..i]$.  The automaton maintains the following invariants.
At step $i$, (1) $\sigma_o$ has been output and this output is valid, (2) the output is transparent, i.e. $\sigma[..i]\in \Property \Rightarrow  \sigma[..i] \in \sigma \vee  \exists \sigma'\preceq\sigma :\forall \sigma';\tau\succeq\sigma':\Property(\sigma';\tau)\Rightarrow \sigma';\tau=\sigma $ (3)$\gamma(\sigma[..i])=\bot \Rightarrow \sigma[..i]=\sigma^i_o;\sigma^i_s$ and (4) the automaton $\mathcal{A}$ never manipulates and action in a disallowed manner.
The third invariant ensures that the automata does not, for instance, suppress an unsuppressible action, or abort the execution on an observable action.

The invariants hold initially since $\sigma[..0]=\epsilon$ and the automaton is in state $\langle\epsilon,\epsilon\rangle$.  Let us assume that INV(i) holds and let $a$ be the next input action. We show that INV(i+1) holds where $\sigma_{i+1} = \sigma[..1];a$.

There are three cases to consider.\\
\begin{enumerate}
  \item $\sigma[..i];a$ is invalid, and there are multiple valid extensions.   In this case the automaton suppresses the input sequence. The automaton enters state $\langle\sigma^{i+1}_o, \sigma^{i+1}_s\rangle$ where $\sigma^{i+1}_o = \sigma^{i}_o$ and $\sigma^{i+1}_s= \sigma^{i}_s;a$. That $\sigma^{i+1}_o \in \Property$  holds from the induction hypothesis. Likewise, the transparency requirement holds trivially since $\neg\Property(\sigma[..i];a)$.  The first conjunct of theorem \ref{thm:Master} ensures that $a$ is controllable.

  \item $\Property(\sigma[..i];a)$. In this case the automaton outputs $\sigma^i_o;a$. By induction we have that $\sigma^i_o;\sigma^i_s;a = \sigma[..i];a$ and thus $\Property(\sigma^{i+1}_o;)$. $\sigma^i_s= \epsilon;$, which means the third part of the invariant holds trivially. Finally, since for in previous states, an action could only have been suppressed if it was controllable, we have that every action in $\sigma^i_o$ is controllable.

  \item $\sigma[..i];a$ has a single valid extension. This includes the case where the only valid extension is $\epsilon$ and the execution is aborted. In this case the automaton enters a loop in which it outputs the actions of the only valid extension one  by one. By the theorem \ref{thm:Master} we have that the output of the monitor is valid and insertable or controlable. Such an action in known to exist by the final conjunct of the theorem \ref{thm:Master}. INV(i+1)(3) holds trivially since $\gamma(\sigma[..i])\neq\bot$.
\end{enumerate}

Since the invariant INV holds at each step, the output is valid, transparent and the monitor does not behave in a manner inconsistent with the limitations on its capabilities. \\
(else-if direction)\\
By negation of theorem \ref{thm:Master}, we have that a property is unenforceable iff it contains at least one valid sequence $\sigma$ meeting the following two properties \\

\begin{enumerate}
 \item  $ \exists \sigma'\preceq \sigma : \neg \Property(\sigma') \wedge \sigma'.last\notin \mathcal{C}$

 \item $\exists \sigma' \preceq \sigma  : \exists \tau\in Property: \tau\succeq\sigma' \wedge \tau \neq\sigma \vee \exists a \in acts(\tau) : a \notin \langle\mathcal{I} \cup \mathcal{C}\rangle$

\end{enumerate}
Informally, the property is unenforceable iff there exists a valid sequence with an invalid prefix that is not in the controllable set, that that valid sequence either is not in the corner case described in section 3 or is not comprised of insertable of controllable actions. When the monitor encounters such a prefix, it cannot accept it since it may be the end of an invalid sequence, but it cannot suppress it since,  the input may have a valid continuation, which the monitor would be unable to re-insert.
\end{proof}

\setcounter{thm}{3}

\begin{thm}
(from \cite{enforceable}) If $\mathcal{L} =\langle\emptyset,\emptyset,\Sigma,\emptyset\rangle$ then $\mathcal{L}^{\Sigma^\infty}$enforceable$_{=}$ is  Safety.
\end{thm}
\begin{proof}
$\Property \in \mathcal{L}^{\Sigma^\infty}$-enforceable$_{=} \Leftrightarrow \forall \sigma\in \Sigma^\infty:\\
(\Property(\sigma)\Leftrightarrow \forall\sigma'\preceq\sigma:\exists \tau\preceq\sigma:\sigma'\preceq\tau\wedge\Property(\tau)\wedge (\forall\tau'\preceq\tau:\neg\Property(\tau')\Rightarrow \tau'.last\in \mathcal{C})\vee \\ (\exists \tau;a\preceq\sigma: a\in \langle D \cup C \rangle \wedge \forall\tau'\preceq\tau;a:\neg\Property(\tau')\Rightarrow \tau'.last\in \mathcal{C} \wedge \\ (\forall \tau'\succeq\tau;a:\Property(\tau')\Rightarrow \tau'=\sigma  \vee  (\Property(\tau)\wedge \neg\exists \tau'\succeq \tau:\Property(\tau') )) ) )$ \\

\hspace*{\fill}$\langle \forall a \in \Sigma : a \in \mathcal{D}\rangle $\\

$\Property \in \mathcal{L}^{\Sigma^\infty}$-enforceable$_{=} \Leftrightarrow \forall \sigma\in \Sigma^\infty:\\
(\Property(\sigma)\Leftrightarrow \forall\sigma'\preceq\sigma:\exists \tau\preceq\sigma:\sigma'\preceq\tau\wedge\Property(\tau)\wedge (\forall\tau'\preceq\tau:\Property(\tau)\vee \\ (\exists \tau;a\preceq\sigma: \forall\tau'\preceq\tau;a:\neg\Property(\tau')\Rightarrow \bot \wedge \\ (\forall \tau'\succeq\tau;a:\Property(\tau')\Rightarrow \tau'=\sigma  \vee  (\Property(\tau)\wedge \neg\exists \tau'\succeq \tau:\Property(\tau') )) ) )$\\

\hspace*{\fill}$\langle \bot \wedge A  = \bot \rangle $\\

$\Property \in \mathcal{L}^{\Sigma^\infty}$-enforceable$_{=} \Leftrightarrow \forall \sigma\in \Sigma^\infty:\\
(\Property(\sigma)\Leftrightarrow \forall\sigma'\preceq\sigma:\exists \tau\preceq\sigma:\sigma'\preceq\tau\wedge (\forall\tau'\preceq\tau:\Property(\tau)\vee \\ (\exists \tau;a\preceq\sigma: \forall\tau'\preceq\tau;a:\Property(\tau') \\   \vee  (\Property(\tau)\wedge \neg\exists \tau'\succeq \tau:\Property(\tau') )) ) )$\\

$\Property \in \mathcal{L}^{\Sigma^\infty}$-enforceable$_{=} \Leftrightarrow \forall \sigma\in \Sigma^\infty:\\
(\Property(\sigma)\Leftrightarrow \forall\sigma'\preceq\sigma: \Property(\sigma) ) )$\\

\hspace*{\fill}$\langle$ definition of Safety $\rangle $\\

$\Property \in \mathcal{L}^{\Sigma^\infty}$-enforceable$_{=} \Leftrightarrow  \Property \in$ safety \\
\end{proof}

\begin{thm}
(from \cite{nonsafetyJournal}) If $\mathcal{L} =\langle\emptyset,\emptyset,\emptyset,\Sigma\rangle$ then $\mathcal{L}^{\Sigma^\infty}$enforceable$_{=}$ is Infinite Renewal or the corner case.
\end{thm}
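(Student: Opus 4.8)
The plan is to obtain this statement directly from the Master characterization (Theorem~\ref{thm:Master}) by specializing it to the lattice $\mathcal{L}=\langle\emptyset,\emptyset,\emptyset,\Sigma\rangle$, exactly as the Schneider case (Theorem~4) was obtained by making every action suppressible. Here the relevant substitution is the dual one, $\forall a\in\Sigma: a\in\mathcal{C}$, and the objective is to show that under this hypothesis the right-hand side of Theorem~\ref{thm:Master} collapses into the disjunction of the infinite renewal condition and the corner-case condition.

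First I would propagate the hypothesis $\Sigma=\mathcal{C}$ through every side condition that tests membership in a lattice level. The guard $\tau'.last\in\mathcal{C}$ becomes uniformly true, so each implication of the form $\neg\Property(\tau')\Rightarrow\tau'.last\in\mathcal{C}$ reduces to $\top$; likewise the clause $a\in\mathcal{D}\cup\mathcal{C}$ becomes $\top$, and $acts(\sigma\backslash\tau';a)\in\mathcal{C}\cup\mathcal{I}$ becomes $\top$. Using the propositional rewrites $\top\wedge A\equiv A$ to discard the trivially satisfied conjuncts, the body of Theorem~\ref{thm:Master} simplifies to $\Property(\sigma)\Leftrightarrow\big(\forall\sigma'\preceq\sigma:\exists\tau\preceq\sigma:\sigma'\preceq\tau\wedge\Property(\tau)\big)\vee\big(\exists\tau;a\preceq\sigma:\forall\tau'\succeq\tau;a:\Property(\tau')\Rightarrow\tau'=\sigma\vee(\Property(\tau)\wedge\neg\exists\tau'\succeq\tau:\Property(\tau'))\big)$, exactly the kind of simplification carried out for the Schneider theorem, only dualized.

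Next I would identify the two surviving disjuncts with the named property classes of Section~\ref{sec:partial}. The first disjunct is precisely the $\mathcal{L}$-Renewal predicate after the same $\Sigma=\mathcal{C}$ substitution, which erases its trailing $\tau'.last\in\mathcal{C}$ clause; read as an equivalence with $\Property(\sigma)$ it recovers the standard infinite renewal condition, namely that every valid sequence witnesses each of its own prefixes by a valid prefix, while invalid sequences fail this. The second disjunct is the $\mathcal{L}$-Corner case predicate under the same substitution, its clauses $\sigma'.last\in\mathcal{D}\cup\mathcal{C}$ and $acts(\tau)\subseteq\mathcal{I}\cup\mathcal{C}$ having vanished, and it states that beyond some prefix the only valid extension is $\sigma$ itself. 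Matching these two simplified forms against the definitions given earlier then yields that $\mathcal{L}^{\Sigma^\infty}$-enforceable$_{=}$ is exactly the union of infinite renewal and the corner case.

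The main obstacle I anticipate is the bookkeeping of the nested quantifier-and-disjunction structure of Theorem~\ref{thm:Master}, in particular pinning down the scope of the outer $\forall\sigma'\preceq\sigma$ relative to the $\vee$ that separates the renewal clause from the corner-case clause. I would need to verify that, once the trivial conjuncts are removed, the formula genuinely factors as (renewal)$\,\vee\,$(corner case) rather than interleaving the two, and that the corner-case disjunct's ``every valid extension of $\tau;a$ equals $\sigma$'' reading coincides with the $\exists\sigma'\preceq\sigma$ formulation of the $\mathcal{L}$-Corner case definition. Once the scoping is settled, the remaining equalities are purely propositional rewrites of the kind $\bot\wedge A\equiv\bot$ and $\top\wedge A\equiv A$, so no genuinely new argument beyond the dualized Schneider simplification is required.
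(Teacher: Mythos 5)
Your proposal is correct and follows essentially the same route as the paper's own proof: specialize Theorem~\ref{thm:Master} with $\forall a\in\Sigma: a\in\mathcal{C}$, discharge the now-trivial lattice guards ($\tau'.last\in\mathcal{C}$, $a\in\mathcal{D}\cup\mathcal{C}$, $acts(\sigma\backslash\tau';a)\in\mathcal{C}\cup\mathcal{I}$) via $\top\wedge A\equiv A$, and recognize the two surviving disjuncts as infinite renewal and the corner case. The only minor difference is that the paper inserts one further rewriting step to drop the $\Property(\tau)\wedge\neg\exists\tau'\succeq\tau:\Property(\tau')$ clause before matching against the corner-case definition, a piece of bookkeeping your plan correctly anticipates under the quantifier-scoping issue you flag.
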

\begin{proof}
$\Property \in \mathcal{L}^{\Sigma^\infty}$-enforceable$_{=}$ $ \Leftrightarrow \forall \sigma\in \Sigma^\infty:\\
(\Property(\sigma)\Leftrightarrow \forall\sigma'\preceq\sigma:\exists \tau\preceq\sigma:\sigma'\preceq\tau\wedge\Property(\tau)\wedge (\forall\tau'\preceq\tau:\neg\Property(\tau')\Rightarrow \tau'.last\in \mathcal{C})\vee \\ (\exists \tau;a\preceq\sigma: a\in \langle D \cup C \rangle \wedge \forall\tau'\preceq\tau;a:\neg\Property(\tau')\Rightarrow \tau'.last\in \mathcal{C} \wedge \\ (\forall \tau'\succeq\tau;a:\Property(\tau')\Rightarrow \tau'=\sigma  \vee  (\Property(\tau)\wedge \neg\exists \tau'\succeq \tau:\Property(\tau') )) ) )$ \\

\hspace*{\fill}$\langle \forall a \in \Sigma : a \in \mathcal{C}\rangle $\\

$\Property \in \mathcal{L}^{\Sigma^\infty}$-enforceable$_{=}$ $ \Leftrightarrow \forall \sigma\in \Sigma^\infty:\\
(\Property(\sigma)\Leftrightarrow \forall\sigma'\preceq\sigma:\exists \tau\preceq\sigma:\sigma'\preceq\tau\wedge\Property(\tau)\wedge \top )\vee \\
(\exists \tau;a\preceq\sigma:   \forall\tau'\preceq\tau;a:\neg\Property(\tau')\Rightarrow \top \wedge \\ (\forall \tau'\succeq\tau;a:\Property(\tau')\Rightarrow \tau'=\sigma  \vee  (\Property(\tau)\wedge \neg\exists \tau'\succeq \tau:\Property(\tau') )) ) )$ \\

\hspace*{\fill}$\langle \top \wedge A = A\rangle $\\

$\Property \in \mathcal{L}^{\Sigma^\infty}$-enforceable$_{=}$ $ \Leftrightarrow \forall \sigma\in \Sigma^\infty:\\
(\Property(\sigma)\Leftrightarrow \forall\sigma'\preceq\sigma:\exists \tau\preceq\sigma:\sigma'\preceq\tau\wedge\Property(\tau))\vee \\
(\exists \tau;a\preceq\sigma:   \forall\tau'\preceq\tau;a: \\ (\forall \tau'\succeq\tau;a:\Property(\tau')\Rightarrow \tau'=\sigma  \vee  (\Property(\tau)\wedge \neg\exists \tau'\succeq \tau:\Property(\tau') ) )) )$ \\

\hspace*{\fill}$\langle \forall \sigma\in \Sigma^\infty: \Property(\sigma)\Leftrightarrow (\forall\sigma'\preceq\sigma:\exists \tau\preceq\sigma:\sigma'\preceq\tau\wedge\Property(\tau)) \Rightarrow
\exists \tau;a\preceq\sigma:  \forall \tau'\succeq\tau;a:  \neg\Property(\tau')\Rightarrow  (\Property(\tau)\wedge \neg\exists \tau'\succeq \tau:\Property(\tau')) \rangle $\\

$\Property \in \mathcal{L}^{\Sigma^\infty}$-enforceable$_{=}$ $ \Leftrightarrow \forall \sigma\in \Sigma^\infty:\\
(\Property(\sigma)\Leftrightarrow \forall\sigma'\preceq\sigma:\exists \tau\preceq\sigma:\sigma'\preceq\tau\wedge\Property(\tau))\vee \\
(\exists \tau;a\preceq\sigma:   \forall\tau'\preceq\tau;a: \\ (\forall \tau'\succeq\tau;a:\Property(\tau')\Rightarrow \tau'=\sigma  )) )$ \\

\hspace*{\fill}$\langle$ definition of renewal and of the corner case $\rangle $\\

$\Property \in \mathcal{L}^{\Sigma^\infty}$-enforceable$_{=} \Leftrightarrow  \Property \in$ renewal or the corner case  \\
\end{proof}


\begin{cor}\label{corr:1}
Let $\mathcal{L}$ be a lattice over a set of actions $\Sigma$ as described above and let $\Sigma_1,\Sigma_2\in\{\mathcal{O},\mathcal{D},\mathcal{I},\mathcal{C}\}$ and $\Sigma_1 \sqsubseteq \Sigma_2$.
$\forall a \in \Sigma: \mathcal{L}_{\Sigma_1}$-enforceable$_=$    $ \subseteq \mathcal{L}_{\Sigma_1\xrightarrow{a}\Sigma_2}$-enforceable.
\end{cor}
\begin{proof}
Any property that is $\mathcal{L}_{\Sigma_1}$-enforceable is trivially $\mathcal{L}_{\Sigma_1\xrightarrow{a}\Sigma_2}$-enforceable.\\

Let $\Sigma_1=\mathcal{O}$ and  $\Sigma_2\in\{\mathcal{I},\mathcal{C} \}$.
The property $\Property(\sigma)\Leftrightarrow \exists \tau\in \Sigma^\infty : \sigma = a;tau$, which states that any valid execution must begin with a distinguished action $a$ is not $\mathcal{L}_{\Sigma_1}$-enforceable since the monitor cannot correct an invalid sequence by adding the missing initial action. It is $\mathcal{L}_{\Sigma_1\xrightarrow{a}\Sigma_2}$-enforceable.

Let  $\Sigma_1=\mathcal{D}$ and  $\Sigma_2 = \mathcal{C}$.
The property $\Property(\sigma)\Leftrightarrow \sigma = aa$ is not $\mathcal{L}_{\Sigma_1}$-enforceable since when faced with a single $a$, the monitor can neither suppress it (which would make it impossible to return a valid syntactically equal sequence if the next action is also $a$), nor output it, since the execution would be then be irremediably invalid in all other cases. The property is  $\mathcal{L}_{\Sigma_1\xrightarrow{a}\Sigma_2}$-enforceable.

Let  $\Sigma_1=\mathcal{O}$ and  $\Sigma_2 = \mathcal{D}$  or $\Sigma_1=\mathcal{I}$ and  $\Sigma_2 = \mathcal{C}$.
The property $\Property(\sigma)\Leftrightarrow a\notin acts(\sigma)$ is not $\mathcal{L}_{\Sigma_1}$-enforceable since the monitor lacks the ability to suppress invalid $a$ actions.  It is $\mathcal{L}_{\Sigma_1\xrightarrow{a}\Sigma_2}$-enforceable by simple suppression of these actions.
\end{proof}

\begin{cor}
\label{cor:2} Let $\mathcal{L} =\langle\Sigma,\emptyset,\emptyset,\emptyset\rangle, \Property{} \in \mathcal{L}_{\Sigma}$-enforceable $  \Leftrightarrow \Property{} = \Sigma^{\infty}$.
\end{cor}
\begin{proof}
Corollary \ref{cor:2} follows immediately from Theorem \ref{thm:Master}.
\end{proof}


\begin{thm}
Let  $\mathcal{L} =\langle\emptyset,\emptyset,\Sigma,\emptyset\rangle$.   $\mathcal{L}^{\Sigma^\infty}$-enforceable$_{\cong_{\triangleleft}}$ is the set of safety properties.
\end{thm}
\begin{proof}
Trivial. Since the monitor lacks the ability to insert any action into the execution, it behaves like a suppression automaton \cite{MoreEnforce}. The equivalence relation imposes that any action present in the input must also be present in the output. Since the monitor can only allow an action or terminate the execution, any enforceable property is necessarily prefix closed.
\end{proof}


\begin{thm}
Let $\Property\subseteq\mathcal{P}(\Sigma^\infty)$ and let $\mathcal{L} =\langle\emptyset,\Sigma,\emptyset,\emptyset\rangle$. If the monitor cannot delay the occurrence of actions performed by the target program, then  $ \Property\in \mathcal{L}^{\Sigma^\infty}$-enforceable$_{\cong_{\triangleleft}}$  $ \Leftrightarrow \exists \Property' : \Property' \subseteq \Property{} $ and $\Property'$ is infinite renewal $\cap$ liveness.
\end{thm}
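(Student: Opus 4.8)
The claim concerns a monitor that works under subword equivalence $\cong_\triangleleft$, where all actions are insertable ($\mathcal{L}=\langle\emptyset,\Sigma,\emptyset,\emptyset\rangle$) and, crucially, the monitor cannot delay actions: each input action must be emitted immediately, after which the monitor may append freely-chosen actions. I want to characterize exactly when $\Property$ is enforceable$_{\cong_\triangleleft}$ in this restricted regime, and the answer is: precisely when some subproperty $\Property'\subseteq\Property$ lies in infinite renewal $\cap$ liveness.

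\textbf{Plan of proof (both directions).} The plan is to prove the two implications separately, building the monitor explicitly for the $(\Leftarrow)$ direction and extracting the witness subproperty for $(\Rightarrow)$.

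For the $(\Leftarrow)$ direction I assume a witness $\Property'\subseteq\Property$ that is both liveness and infinite renewal, and I construct a monitor enforcing$_{\cong_\triangleleft}$ $\Property$. The monitor reads input symbol by symbol, immediately echoing each input action $a$ (as it must, since it cannot delay), and maintains the output trace $\sigma_o$. Its job is to keep steering $\sigma_o$ back into $\Property'$ by inserting corrective actions. The key is that liveness of $\Property'$ guarantees that \emph{every} finite trace — in particular the current prefix of echoed input — extends to some trace in $\Property'$; the monitor picks such a finite completion and inserts it after the current input action. The renewal structure guarantees that these valid completions recur as infinitely many valid prefixes, so that soundness holds in the limit for infinite executions as well: the output trace has infinitely many valid prefixes (hence is itself in $\Property'\subseteq\Property$ for the infinite renewal case) and is valid for finite executions. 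Transparency under $\cong_\triangleleft$ is immediate because every input action is echoed, so the original input is always a subword of the output. I would verify that soundness ($\sigma_o\in\Property$) and transparency ($\sigma_{\text{in}}\triangleleft_\Sigma \sigma_o$) hold at every step and in the limit.

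For the $(\Rightarrow)$ direction I assume $\Property$ is enforceable$_{\cong_\triangleleft}$ by such a no-delay monitor and define $\Property'$ to be the set of traces the monitor actually outputs over all possible inputs. Soundness gives $\Property'\subseteq\Property$. Liveness of $\Property'$ follows because the no-delay requirement forces the monitor, on \emph{any} finite input, to produce an output extending that input (the input is a subword of the output, and soundness forces that output into $\Property$), so every finite trace extends into $\Property'$. Infinite renewal of $\Property'$ follows from the incremental, prefix-monotone way the monitor builds its output: because each input prefix must already be echoed and immediately brought to validity, valid outputs recur infinitely along any infinite accepted run, while rejected runs fail to accumulate infinitely many valid prefixes.

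\textbf{Main obstacle.} The delicate point is the infinite-renewal requirement in the $(\Rightarrow)$ direction. Liveness is the easy half — it follows almost mechanically from the no-delay echo constraint plus soundness. The subtlety is arguing that the collection $\Property'$ of realized outputs has the renewal structure, i.e. that infinite valid outputs have infinitely many valid prefixes while infinite invalid ones have only finitely many. I expect the cleanest route is to observe that, since the monitor must immediately echo each action and correct it in finitely many steps, valid prefixes of the output are generated at unbounded positions; the argument must carefully handle the case where the target supplies an adversarial infinite stream, showing that either the output stabilizes into repeated validity (renewal) or the monitor's soundness is violated. Formalizing ``corrects in finitely many steps'' into the exact renewal predicate, and ruling out pathological infinite-insertion behaviour, is where the care is required.
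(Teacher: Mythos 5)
Your ``if'' direction is essentially the paper's own argument: the echo-then-insert monitor, which after each echoed action appends a finite completion landing back in $\Property'$, with transparency free because the input remains a subword of the output, and soundness in the limit because an infinite output with infinitely many prefixes in $\Property'$ is itself in $\Property'$ by renewal. One small imprecision: liveness alone only yields an extension in $\Sigma^\infty$; you need renewal (as the paper does) already at this step to make the chosen completion \emph{finite}, not just for the limit argument.

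The genuine gap is in your ``only if'' direction, and it is not where you place it: you call liveness ``the easy half,'' but that is exactly the step that fails. You define $\Property'$ as the set of traces the monitor actually outputs and claim the no-delay constraint forces the monitor, on any finite input, to ``produce an output extending that input.'' It does not: since insertions are interleaved with the echoed actions, the input is only a subword ($\triangleleft_\Sigma$) of the output, not a prefix of it, whereas liveness requires every finite trace in $\Sigma^*$ to be a \emph{prefix} of some member of $\Property'$. The conflation of these two relations is fatal, and the paper's own motivating example witnesses it: for $\Property_{\neg aa}$ (no two consecutive $a$'s), enforced by inserting some $b \neq a$ after every echoed $a$, every realized output avoids consecutive $a$'s, so the finite trace $aa$ has no extension whatsoever inside your $\Property'$; the realized-output set is simply not a liveness property, so it is the wrong witness. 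The paper sidesteps witness extraction entirely and proves the contrapositive: if no subproperty of $\Property$ lies in renewal $\cap$ liveness, then any monitor eventually reaches a finite output position that either is irremediably invalid (insertion cannot repair it) or has no finite valid extension, in which case the monitor's only recourse is an infinite insertion, after which later target actions can never appear in the output, violating transparency. Your worry about establishing renewal for $\Property'$ is legitimate but secondary --- the construction already founders on liveness --- so if you want to complete the proof, you should either switch to the paper's contrapositive case analysis or find a witness other than the raw set of realized outputs.
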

\begin{proof}
(if direction)
Let \Property' be property in Renewal $\cap$ Liveness s.t. \Property' $\subseteq$ \Property\ and let $\mathcal{L} =\langle\emptyset,\Sigma,\emptyset,\emptyset\rangle$. A monitor can $\mathcal{L}^{\Sigma^\infty}_{\cong_{\triangleleft}}$-enforce \Property' as follows: let $\tau\in\Sigma^*$ be the current output so far and let $a\in\Sigma$  be the next program action in the execution. Since \Property' is in Renewal $\cap$ Liveness there necessarily exists a finite sequence $\tau'\in\sigma^*$ s.t. $\tau;a;\tau'\in \Property'$, and the monitor can $\mathcal{L}^{\Sigma^\infty}_{\cong_{\triangleleft}}$-enforce \Property' by outputting this sequence. Since \Property' $\subseteq$ \Property, the  monitor's output is correct w.r.t. \Property\ and since $\tau;a;\tau'\triangleleft_\Sigma a $ the output is transparent. \\
(else-if direction)
A \Property\  not in the intersection of Renewal $\cap$ Liveness can be safety properties or persistence properties.  Safety properties are properties for which a violation of the security policy is irremediable. As such, no suffix can be added by the monitor to the invalid sequence to correct it, thus $\mathcal{L}^{\Sigma^\infty}_{\cong_{\triangleleft}}$-enforcing the property. Likewise, persistence properties include infinite invalid sequence with infinitely many valid prefixes, and infinite valid prefixes with only finitely many valid prefixes. In the former case, the monitor cannot enforce the property because even though its output will continuously be valid, the overall execution will violate the property. In the latter case, the monitor will eventually reach a finite execution $\sigma$ which has no finite valid extension.  If the monitor outputs a valid infinite extension, subsequent action output by the program will not be present in the monitor's output, violating the transparency requirement.  Since there does not exists a property \Property' $\subseteq$ \Property\ s.t. \Property' is in Renewal $\cap$ Liveness, the monitor cannot avoid reaching one of the three cases described above.
\end{proof}


\begin{thm}
Let  $\mathcal{L} =\langle\emptyset,\Sigma,\emptyset, \emptyset\rangle$.   $\mathcal{L}^{\Sigma^\infty}$-enforceable$_{\cong_{\triangleleft}}$ iff there exists a property $\Property' \subseteq   \Property \cap Liveness\cap Renewal:  \forall \sigma\in\Property':  \underset{ \sigma'\in res_{\Property'}(\sigma) }{{\bigcup}}   acts(\sigma')=\Sigma$.
\end{thm}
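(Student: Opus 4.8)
The plan is to prove both directions by relating the monitor's dynamics to the residual structure of a renewal $\cap$ liveness witness $\Property'$. The guiding observation is that with $\mathcal{L}=\langle\emptyset,\Sigma,\emptyset,\emptyset\rangle$ every action is insertable but none is suppressible, so under $\cong_{\triangleleft}$ the monitor must let every action it reads survive into its output (in order), while being free to pad the output with inserted actions placed before or after each read action. Enforcement therefore amounts to steering the growing output through a subset $\Property'\subseteq\Property$ of ``always-recoverable'' valid sequences whose continuations are rich enough to absorb whatever the target does next; the coverage condition $\bigcup_{\sigma'\in res_{\Property'}(\sigma)}acts(\sigma')=\Sigma$ is precisely the assertion that every next action can be absorbed. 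I note in passing that, since $\Property'$ is assumed to be a liveness property, applying liveness to $\sigma;a$ already yields a continuation of $\sigma$ containing $a$, so coverage is the operative hypothesis for the incremental construction even though it is implied by liveness.

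For the \emph{if} direction I would build an insertion monitor maintaining the invariant that its committed output $o$ lies in $\Property'$ and that the input read so far is a subword of $o$. On reading the next action $a$, I would invoke the coverage hypothesis at $\sigma=o$ to obtain $\tau\in res_{\Property'}(o)$ with $a\in acts(\tau)$; writing $\tau=\tau_1;a;\tau_2$, the monitor inserts $\tau_1$, emits $a$, and then extends $o;\tau_1;a$ to a fresh finite valid output in $\Property'$. That such a finite valid extension exists is where renewal is used: a valid sequence through $o;\tau$ has cofinally many valid finite prefixes, so one lying beyond the newly emitted $a$ can be committed. This preserves both invariants, giving soundness for finite inputs directly, and for infinite inputs a limit output with infinitely many valid prefixes, hence valid by renewal; transparency holds because the input is always a subword of the output.

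For the \emph{only if} direction I would start from an arbitrary enforcing monitor $M$ and set $\Property'$ to be the set of sequences $M$ outputs (finite outputs together with limits of outputs on infinite inputs). Soundness gives $\Property'\subseteq\Property$ at once. Coverage is the most transparent consequence of the insertion-only regime: since $M$ cannot suppress, on input $w;a$ it must emit an extension of its output on $w$ in which $a$ appears, so for every $\sigma\in\Property'$ and every $a\in\Sigma$ there is a continuation in $\Property'$ containing $a$. Renewal follows because, to remain transparent on ever-longer valid inputs, $M$ must output valid prefixes cofinally along any infinite run, while soundness forbids an invalid limit from possessing infinitely many valid prefixes.

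The hard part will be establishing that $\Property'$ is a \emph{liveness} property, which is where the argument is genuinely delicate: liveness requires that \emph{every} finite string $\gamma\in\Sigma^*$ extend to a member of $\Property'$, a demand much stronger than the mere recoverability of $M$-reachable outputs, and the insert-before capability of $M$ repairs a faulty $\gamma$ by prepending actions rather than by prolonging $\gamma$ as a prefix. I would therefore close $\Property'$ under the finite valid sequences that $M$ is forced to reach from each of its states, and argue that totality and soundness of $M$ make these continuations cover $\Sigma$ and extend indefinitely, so that the enlarged $\Property'$ is simultaneously liveness, renewal, and coverage-closed while remaining inside $\Property$. Reconciling this closure with the monotonicity of the output (when actions are inserted \emph{before} a buffered input action) and with the finite/infinite clauses of renewal is, rather than any single clever step, the real obstacle, and I would finish by verifying that the resulting $\Property'$ meets all three conditions, thereby closing the equivalence.
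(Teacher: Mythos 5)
Your ``if'' direction is essentially the paper's own construction---maintain an output inside $\Property'$, use the coverage condition to absorb each incoming action, commit a finite member of $\Property'$---and it is in fact more careful than the paper's version: the paper simply appends the residual witness $\tau$ without noticing that $\tau$ may be infinite, whereas you invoke renewal of $\Property'$ to commit a finite valid output lying beyond the emitted action, and you handle soundness on infinite inputs. That half stands.

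The genuine gap is exactly the step you flag yourself: liveness of $\Property'$ in the ``only if'' direction. It is not merely delicate---it cannot be closed, because with the liveness conjunct the necessity direction is false. Take the paper's own motivating example $\Property_{os}$ (every write must be preceded by an open), or more simply the property ``any nonempty sequence begins with $a$''. Both are enforceable${}_{\cong_{\triangleleft}}$ in this lattice: the monitor inserts the missing open (resp.\ the missing $a$) immediately \emph{before} the offending action, every output is valid, and the input is always a subword of the output. Yet no $\Property'\subseteq\Property$ can be a liveness property, because the finite string consisting of a lone unopened write (resp.\ of a single $b\neq a$) has no valid prefix-extension at all, and every subset of $\Property$ inherits that dead end. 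Your own side remark that liveness already implies coverage exposes the same defect from the other direction: with liveness retained, the stated condition is literally the condition of the preceding theorem (insertion after actions only), so the theorem as written contradicts the paper's explicit claim---made precisely via $\Property_{os}$---that inserting before an action strictly enlarges the enforceable set. What the paper's appendix actually proves is the statement with liveness \emph{dropped}: its ``if'' argument uses only coverage (with renewal needed implicitly for finiteness), and its ``only if'' argument derives only the failure of the coverage/extendability condition, never producing a liveness property; moreover that necessity argument tacitly assumes the input is a \emph{prefix} of the output, which is exactly what insert-before invalidates. So your plan to ``close'' $\Property'=\{\mbox{outputs of } M\}$ into a liveness property is doomed (no closure inside $\Property$ can create extensions of strings to which $\Property$ gives none); instead you should prove the corrected statement: enforceable${}_{\cong_{\triangleleft}}$ iff there exists $\Property'\subseteq\Property$ in renewal satisfying coverage, where coverage plays the role of liveness relativized to $\Property'$. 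Under that reading your ``if'' direction goes through verbatim, and your ``only if'' construction needs only coverage (from the no-suppression discipline, as you argue) and renewal (adjoin to the finite outputs the limits of outputs along chains of inputs), both of which your sketch essentially contains.
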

\begin{proof}
(if direction) Let \Property\ and \Property'\ be a properties as described above. Let $\sigma\in \Sigma^*$ be the input sequence so far and let $a\in\Sigma$ be the next program action in the execution.  By definition, there exists a sequence $\tau$ s.t. $\sigma;\tau\in \Property'$  (satisfying correctness since \Property' $\subseteq$ \Property) and $a \in acts(\tau)$ (satisfying transparency).\\
(else-if direction) If the condition above does no hold, then by definition there exists a sequence $\tau\in\Sigma^*$ such that $\tau$ and an action $a\in\Sigma$ such that $\tau$ has no valid finite extension containing $a$.  This makes it impossible to enforce for a monitor to enforce the property in a correct and transparent manner if $\tau$ is the input sequence.
\end{proof}

\begin{thm}
If $\mathcal{L} =\langle\emptyset,\emptyset,\Sigma,\emptyset\rangle$ then $\mathcal{L}^{\Sigma^\infty}$-enforceable$_{\cong_{\triangleright}}$ = $\mathcal{P}(\Sigma^\infty)$.
\end{thm}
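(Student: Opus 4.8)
The plan is to establish the set equality by proving the two inclusions separately. The inclusion of $\mathcal{L}^{\Sigma^\infty}$-enforceable$_{\cong_{\triangleright}}$ in $\mathcal{P}(\Sigma^\infty)$ is immediate, since any enforceable property is by definition a set of sequences from $\Sigma^\infty$, hence an element of the powerset. All the content therefore lies in the reverse inclusion: I would show that \emph{every} property $\Property \subseteq \Sigma^\infty$ is $\mathcal{L}^{\Sigma^\infty}$-enforceable$_{\cong_{\triangleright}}$ under the lattice $\mathcal{L} = \langle\emptyset,\emptyset,\Sigma,\emptyset\rangle$, by fixing an arbitrary $\Property$ and exhibiting a single monitor that enforces it.

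First I would describe the monitor. Because $\mathcal{L}_{\mathcal{D}} = \Sigma$, every action is suppressible, so the monitor is allowed to delete any action the target requests. I let it suppress every incoming action, so that on any input its emitted trace is the empty sequence $\epsilon$ (one could equally have it emit the longest valid subword read so far, but $\epsilon$ is the simplest choice and the argument is the same). This behaviour is consistent with the lattice constraints precisely because no action lies in $\mathcal{O} \cup \mathcal{I} \cup \mathcal{C}$, so there is never an action the monitor is forbidden to drop.

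Next I would verify the two clauses of effective$_{\cong_{\triangleright}}$ enforcement. For soundness, the only sequence the monitor ever outputs is $\epsilon$; since we restrict throughout to reasonable properties we have $\Property(\epsilon)$, and hence every output lies in $\Property$. For transparency, suppose the unmonitored execution $\tau$ is already valid, i.e.\ $\Property(\tau)$; the output must then be $\cong_{\triangleright}$-equivalent to $\tau$. Since the monitor's output is a subword of the original and $\epsilon \triangleleft_{\Sigma} \tau$ holds for every sequence $\tau$ (the empty word is reachable from any sequence by suppression), the relation $\cong_{\triangleright}$ identifies $\tau$ with $\epsilon$. Both clauses thus hold for an arbitrary input and for the arbitrary $\Property$, giving $\Property \in \mathcal{L}^{\Sigma^\infty}$-enforceable$_{\cong_{\triangleright}}$ and closing the reverse inclusion.

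The step that needs the most care — though it is conceptually light — is confirming that nothing in the formal enforcement model obstructs the all-suppressing monitor and that $\cong_{\triangleright}$ genuinely collapses $\tau$ with $\epsilon$. The first point is secured by $\Sigma = \mathcal{D}$, which empties the protected sets $\mathcal{O} \cup \mathcal{I} \cup \mathcal{C}$; the second is the trivial fact that $\epsilon \triangleleft_{\Sigma} \tau$ for every finite or infinite $\tau$. I would finally remark that soundness rests only on $\Property(\epsilon)$, exactly the reasonableness hypothesis already assumed in the paper, so no further condition on $\Property$ is required — which is precisely what makes the enforceable set as large as all of $\mathcal{P}(\Sigma^\infty)$.
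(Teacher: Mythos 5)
Your proposal is correct and follows essentially the same argument as the paper: the paper's proof likewise exhibits the monitor that always outputs $\epsilon$, noting that $\epsilon$ is valid (by reasonableness) and $\cong_{\triangleright}$-equivalent to every possible input sequence. Your additional remarks --- the trivial inclusion into $\mathcal{P}(\Sigma^\infty)$, the check that $\Sigma = \mathcal{D}$ permits suppressing every action, and the alternative of emitting the longest valid subword --- only make explicit what the paper leaves implicit.
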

\begin{proof}
Since the empty sequence $\epsilon$ is always valid, and equivalent to every possible sequence, the monitor can $\mathcal{L}^{\Sigma^\infty}$-enforce$_{\cong_{\triangleright}}$ any property by always outputting that sequence.
\end{proof}

\begin{cor}
Let  $\mathcal{L} =\langle\emptyset,\emptyset,\Sigma,\emptyset\rangle$. $\forall a\in \mathcal{L_{D}}: \mathcal{L}^{\Sigma^\infty}$-enforceable$_{\cong_{\triangleright}}$ = $\mathcal{L}^{\mathcal{S}}_{\mathcal{D}\xrightarrow{a}\mathcal{C}}$-enforceable$_{\cong_{\triangleright}}$.
\end{cor}
\begin{proof}
Immediate from theorem \ref{thm:total}.
\end{proof}

\begin{cor}
Let  $\mathcal{L} =\langle\emptyset,\emptyset,\Sigma,\emptyset\rangle$. $\forall \mathcal{S}\subseteq \Sigma^\infty: \mathcal{L}^{\Sigma^\infty}$-enforceable$_{\cong_{\triangleright}}$ =
$\mathcal{L}^{\Sigma^\infty}$-en\-for\-ceable$_{\cong_{\triangleright}}$.
\end{cor}
\begin{proof}
Immediate from theorem \ref{thm:total}.
\end{proof}

\setcounter{thm}{9}
\begin{thm}
If $\mathcal{L} =\langle\mathcal{O},\emptyset,\mathcal{D},\emptyset\rangle$ then $\forall\sigma\in \Sigma^\infty: \Property(\sigma)\in \mathcal{L}^{\Sigma^\infty}$-enforceable$_{\cong_{\triangleright}} \Leftrightarrow  \exists \Property'\subseteq\Property:\Property' \in$ $\mathcal{L}$-safety.
\end{thm}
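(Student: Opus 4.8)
The plan is to prove both directions after first pinning down what $\cong_{\triangleright}$-enforcement actually requires. As in the proof of Theorem~\ref{thm:total}, transparency is essentially vacuous here: any trace the monitor can output is a subword of the input, hence automatically $\cong_{\triangleright}$-equivalent to it (the empty trace already witnesses this). Consequently a monitor $\cong_{\triangleright}$-enforces \Property{} exactly when, processing the input online, it can guarantee that its committed output lies in \Property{} for every input, subject to the two structural constraints of $\mathcal{L}=\langle\mathcal{O},\emptyset,\mathcal{D},\emptyset\rangle$: every observable action must be emitted, and the monitor may suppress an action or abort only when that action is in $\mathcal{D}$. Since soundness is required on every finite input and actions are consumed one at a time, the committed output is valid after each step and grows by at most one action per input action.

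For the (if) direction, assume $\Property'\subseteq\Property$ with $\Property'$ an $\mathcal{L}$-safety property, and let the monitor emit each incoming action as long as the resulting output still has a valid $\Property'$-extension, aborting the first time the next action would make the output doomed. The single fact that makes this work is: \emph{if $o$ is $\Property'$-valid and $o;a$ is doomed under $\Property'$, then $a\in\mathcal{D}$}. To see this, apply the $\mathcal{L}$-safety condition to the invalid finite trace $o;a$; it yields $\tau;c\preceq o;a$ with $\Property'(\tau)$, $c\in\mathcal{D}$ and $\tau;c$ doomed. Since $o$ is valid it has no doomed prefix, so $\tau;c$ cannot be a prefix of $o$, forcing $\tau;c=o;a$ and hence $a=c\in\mathcal{D}$. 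The contrapositive says an observable action never dooms a valid output, and because a safety property satisfies $\Property'(\rho)\Leftrightarrow\rho$ has no doomed prefix, every viable output is already valid. Thus the monitor only ever suppresses or aborts on deletable actions, keeps its output in $\Property'\subseteq\Property$, and never gets stuck on an observable action, establishing soundness (transparency being vacuous).

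For the (only if) direction, take any monitor $M$ that $\cong_{\triangleright}$-enforces \Property{} and set $\Property'=\{\,M(\sigma):\sigma\in\Sigma^\infty\,\}$, the set of all its committed outputs; by soundness $\Property'\subseteq\Property$, so it suffices to show $\Property'$ is $\mathcal{L}$-safety. Given a finite $\rho\notin\Property'$, let $\tau$ be the longest prefix of $\rho$ lying in $\Property'$ (it exists, as $\epsilon\in\Property'$) and let $a$ be the following action, so $\tau;a\preceq\rho$ and $\tau;a\notin\Property'$. First, $a\in\mathcal{D}$: were $a$ observable, then appending $a$ to an input on which $M$ outputs $\tau$ would force $M$ to emit $a$, producing output $\tau;a\in\Property'$, a contradiction. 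Second, $\tau;a$ is doomed in $\Property'$: since outputs grow one action at a time, any output extending $\tau;a$ would have passed through $\tau;a$, again placing $\tau;a\in\Property'$. Hence $\tau;a$ witnesses the $\mathcal{L}$-safety condition for $\rho$.

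The step I expect to be the main obstacle is closing the (only if) direction for \emph{infinite} $\rho$: I must rule out an infinite $\rho\notin\Property'$ all of whose finite prefixes lie in $\Property'$, since such a trace would have no finite doomed prefix and break safety. The plan is to stitch the finite witnesses $p_n$ (with $M(p_n)=\rho[..n]$) into a single input: the tree of input prefixes that $M$ maps onto prefixes of $\rho$ is infinite, and a K\"onig/compactness argument — or, equivalently, replacing $M$ by the canonical greedy monitor that suppresses only at doom points and invoking soundness of its limit output — yields an input whose output is $\rho$, so $\rho\in\Property'$ after all. This limit-coherence step, together with the care required when $\Sigma$ is countably infinite so that the tree is not finitely branching, is the delicate part; the finite combinatorics above are routine.
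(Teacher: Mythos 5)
Your (if) half is correct and is essentially the paper's own argument: enforce the $\mathcal{L}$-safety subproperty $\Property'$ by truncation. Your explicit lemma --- if $o\in\Property'$ and $o;a$ is doomed under $\Property'$, then $a\in\mathcal{D}$ --- is exactly what the paper leaves implicit when it asserts that truncation is possible, so this half is if anything more rigorous than the published proof.

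The (only if) half has a genuine gap, and it sits exactly at the step you flagged: the output set $\Property'=\{M(\sigma):\sigma\in\Sigma^\infty\}$ of an enforcing monitor need \emph{not} be limit-closed, so it need not be $\mathcal{L}$-safety, and no K\"onig/compactness argument can repair this, because the coherence claim you want ($\rho\in\Property'$ whenever all finite prefixes of $\rho$ lie in $\Property'$) is simply false. Counterexample: take $\Sigma=\{a,b\}$ with both actions in $\mathcal{D}$ (nothing in the statement forces $\mathcal{O}\neq\emptyset$; an observable letter can be added without changing anything), let $\Property=\{b^n: n\in\mathbb{N}\}$, and let $M$ suppress every $a$ and emit the $j$-th received $b$ iff $j$ is at most the number of $a$'s received before the first $b$. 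Every output of $M$, on finite or infinite inputs, is some finite $b^n$ (the quota is fixed, and finite, at the first $b$), so $M$ soundly enforces $\Property$; yet each prefix $b^n$ of $b^\omega$ is an output (on input $a^nb^n$) while $b^\omega$ is an output of no input. Hence $\Property'=\Property$, the invalid sequence $b^\omega$ has no doomed prefix, and $\Property'$ is not $\mathcal{L}$-safety; your finite-$\rho$ combinatorics (which are correct) cannot be extended. Your fallback --- the ``canonical greedy monitor'' --- fails on the same example, since greedy emission outputs $b^\omega$ on input $b^\omega$, which is unsound. The theorem itself survives because you are free to choose a different witness property: here $\{b^n:n\le K\}$ works for any fixed $K$, and in general there is a one-line witness that avoids output sets entirely. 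Take $\Property'=\mathcal{O}^\infty$, the set of all sequences over $\mathcal{O}$: it is prefix-closed, limit-closed, and exited only on $\mathcal{D}$-actions, hence always $\mathcal{L}$-safety; and enforceability forces $\mathcal{O}^\infty\subseteq\Property$, since on a purely observable input the monitor can neither suppress, insert, nor abort, so its output is the input itself and soundness places it in $\Property$. Substituting this witness closes the direction. (Note the paper's own argument is different again --- it negates the statement and exhibits an invalid sequence with no valid abort point --- and, unlike the repair above, it tacitly treats the monitor as truncation-only.)
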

\begin{proof}
(if direction)The property \Property\ can be enforced by enforcing the property $\Property' \subseteq\Property$. Since \Property' is a safety property, it can be enforced by truncation. Since \Property' is a subset of \Property, enforcing \Property' satisfy the correctness requirement for \Property. Enforcing the property by truncation guarantees that the transparency requirement is respected.

(else-if direction)Let \Property\ be a property for which there does not exist a property \Property' such that  $\Property'\subseteq\Property$  and \Property' $\in$ $\mathcal{L}$-safety. There must exist a sequence $\sigma\notin\Property$ such that $\neq\exists \tau\preceq\sigma:\tau\in\Property\wedge \tau.last\in\{\mathcal{D}\cup\mathcal{C}\}\wedge \neq\exists\tau'\succeq\tau:\Property(\tau')$ . If $\tau$ this is the input sequence, the monitor cannot enforce the property since there is no valid prefix upon which it can abort the execution.
\end{proof}

\end{document}